\newcommand{\bra}[1]{{\langle{#1}\vert}}
\newcommand{\ket}[1]{{\vert{#1}\rangle}}
\newcommand{\qw}[1][-1]{\ar @{-} [0,#1]}
\newcommand{\qwx}[1][-1]{\ar @{-} [#1,0]}
\newcommand{\gate}[1]{*+<.6em>{#1} \POS ="i","i"+UR;"i"+UL **\dir{-};"i"+DL **\dir{-};"i"+DR **\dir{-};"i"+UR **\dir{-},"i" \qw}
\newcommand{\control}{*!<0em,.025em>-=-<.2em>{\bullet}}
\newcommand{\ctrl}[1]{\control \qwx[#1] \qw}
\newcommand{\multigate}[2]{*+<1em,.9em>{\hphantom{#2}} \POS [0,0]="i",[0,0].[#1,0]="e",!C *{#2},"e"+UR;"e"+UL **\dir{-};"e"+DL **\dir{-};"e"+DR **\dir{-};"e"+UR **\dir{-},"i" \qw}
\newcommand{\ghost}[1]{*+<1em,.9em>{\hphantom{#1}} \qw}
\newcommand{\rstick}[1]{*!L!<-.5em,0em>=<0em>{#1}}
\newcommand{\lstick}[1]{*!R!<.5em,0em>=<0em>{#1}}
\newcommand{\ustick}[1]{*!D!<0em,-.5em>=<0em>{#1}}
\newcommand{\Qcircuit}{\xymatrix @*=<0em>}
\newcommand{\F}{\mathbb{F}}
\newcommand{\bracket}[3]{\langle #1|#2|#3 \rangle}
\newcommand{\braket}[1]{\langle #1 \rangle}
\newcommand{\sm}[1]{\left( \begin{smallmatrix} #1 \end{smallmatrix} \right)}
\DeclareMathOperator{\poly}{poly}
\DeclareMathOperator{\tr}{tr}
\DeclareMathOperator{\gap}{gap}
\newcommand{\be}{\begin{equation}}
\newcommand{\ee}{\end{equation}}
\newcommand{\bea}{\begin{eqnarray}}
\newcommand{\eea}{\end{eqnarray}}
\newcommand{\bes}{\begin{equation*}}
\newcommand{\ees}{\end{equation*}}
\newcommand{\beas}{\begin{eqnarray*}}
\newcommand{\eeas}{\end{eqnarray*}}
\def\clap#1{\hbox to 0pt{\hss#1\hss}}
\newtheorem*{rep@theorem}{\rep@title}
\newcommand{\newreptheorem}[2]{%
\newenvironment{rep#1}[1]{%
 \def\rep@title{#2 \ref{##1} (restated)}%
 \begin{rep@theorem}}%
 {\end{rep@theorem}}}
\newtheorem{thm}{Theorem}
\newtheorem*{thm*}{Theorem}
\newtheorem{cor}[thm]{Corollary}
\newtheorem*{lem*}{Lemma}
\newtheorem{prop}[thm]{Proposition}
\newtheorem{obs}[thm]{Observation}
\newtheorem{prob}[thm]{Problem}
\begin{document}

\title{Quantum circuits and low-degree polynomials over $\F_2$}

\author{Ashley Montanaro\footnote{School of Mathematics, University of Bristol, UK; {\tt ashley.montanaro@bristol.ac.uk}.}}

\maketitle

\begin{abstract}
In this work we explore a correspondence between quantum circuits and low-degree polynomials over the finite field $\F_2$. Any quantum circuit made up of Hadamard, Z, controlled-Z and controlled-controlled-Z gates gives rise to a degree-3 polynomial over $\F_2$ such that calculating quantum circuit amplitudes is equivalent to counting zeroes of the corresponding polynomial. We exploit this connection, which is especially clean and simple for this particular gate set,  in two directions. First, we give proofs of classical hardness results based on quantum circuit concepts. Second, we find efficient classical simulation algorithms for certain classes of quantum circuits based on efficient algorithms for classes of polynomials.
\end{abstract}

% ------------------------------------------------------------------------------

\section{Introduction}

Quantum computers are believed to outperform classical computers for important tasks as varied as simulation of quantum mechanics and factorisation of large integers. Although no large-scale general-purpose quantum computer has been built as yet, quantum computation can nevertheless already be used as a theoretical tool to study other areas of science and mathematics, without the need for an actual quantum computer.

This work explores a simple correspondence between quantum circuits and low-degree polynomials over the finite field $\F_2$, i.e.\ the integers modulo 2. By picking the right gate set, it turns out that quantum circuit amplitudes have a close connection to counting zeroes of such polynomials. This correspondence can be exploited in two directions. On the one hand, ideas about quantum circuits can be used to prove purely classical results regarding the computational complexity of counting zeroes of polynomials over finite fields. On the other, known classical results about polynomials can be used to give new algorithms for simulating classes of quantum circuits.

A similar perspective has been taken by a number of previous works. Particularly relevant is prior work of Dawson et al.~\cite{dawson05}, who showed that quantum circuit amplitudes for circuits of Toffoli and Hadamard gates can be understood in terms of solutions to systems of polynomial equations involving low-degree polynomials over $\F_2$. Here we use a slightly different universal gate set: Hadamard ($=\frac{1}{\sqrt{2}}\sm{1 & 1\\1& -1}$), Z ($=\sm{1&0\\0&-1}$), controlled-Z (``CZ'') and controlled-controlled-Z (``CCZ''). This is essentially equivalent to the gate set of~\cite{dawson05}, as Toffoli gates are identical to CCZ gates conjugated by a Hadamard gate on the target qubit. However, this small shift in perspective seems to simplify and clarify some of the arguments involved. For example, the connection we use associates a single polynomial with each circuit. Related ideas to~\cite{dawson05} were used by Rudolph~\cite{rudolph09} to give a simple encoding of quantum circuit amplitudes as matrix permanents. The set of circuits we consider is a very special case of the class of ``algebraic quantum circuits'' studied by Bacon, van Dam and Russell~\cite{bacon08} in some generality.

The idea of proving classical results using quantum methods has also been explored previously; see~\cite{drucker11} for a survey of many results in this area. Within computational complexity alone, three relevant examples are Aaronson's proof of the computational hardness of computing the matrix permanent using the close connection between the permanent and linear-optical quantum circuits~\cite{aaronson11a}; Kuperberg's proof of the computational hardness of approximately computing Jones polynomials by expressing these in terms of quantum circuits~\cite{kuperberg15}; and Fujii and Morimae's proof of hardness of computing Ising model partition functions, again based on quantum circuits over a suitable gate set~\cite{fujii13}. More recently, together with Bremner and Shepherd~\cite{bremner15}, the present author used a correspondence between low-degree polynomials and a certain class of simple quantum computations, known as IQP circuits~\cite{shepherd09}, to argue that random IQP circuits are unlikely to be efficiently simulable classically. This holds even if the classical simulator is allowed to be approximate, with a fairly generous notion of approximation.

The correspondence between low-degree polynomials and quantum circuits which we investigate here seems particularly simple and direct. We have therefore tried to use it to highlight some of the beautiful ideas present in previous works, and to produce an accessible introduction to computational complexity issues suitable for physicists; and also an introduction suitable for computer scientists to how one can prove classical results using the quantum circuit model.

We begin by introducing the circuit-polynomial correspondence and proving its correctness, and go on to make some simple observations about this connection. Then, in Section \ref{sec:compcomp}, we introduce the ideas from computational complexity that we will need, and in Section \ref{sec:phard} show that the correspondence can be used to prove classical hardness of exactly computing the number of zeroes of low-degree polynomials. Similarly, in Section \ref{sec:approxcomplexity} we show that approximate computation of this quantity is closely related to quantum computation. We study a new complexity measure for polynomials motivated by this correspondence -- the quantum circuit width -- in Section \ref{sec:width}. Then, in Section \ref{sec:polysim}, we use the circuit-polynomial correspondence to give two simple classical simulation algorithms for classes of quantum circuits: circuits with few CCZ gates (or where the degree-3 part of the polynomial corresponding to the circuit has a small ``hitting set'', qv), and circuits whose corresponding polynomial can be simplified by a linear transformation. We conclude in Section \ref{sec:conclusions} with some open problems.

% ------------------------------------------------------------------------------

\section{Circuits and polynomials}
\label{sec:circpoly}

In this work, we consider quantum circuit amplitudes of the form $\braket{0|C|0}$, where $C$ is a unitary operator expressed as a circuit on $\ell$ qubits with $\poly(\ell)$ gates, and we write $\ket{0} = \ket{0}^{\otimes \ell}$ for conciseness throughout. The gates in $C$ are picked from the set $\mathcal{F} = \{$Hadamard, Z, CZ, CCZ$\}$\footnote{In fact, the Z and CZ gates are not necessary, as they can be produced from CCZ gates together with the use of ancillas, but it will be convenient to include them.}. Using the gate set $\mathcal{F}$ will allow us to write $\bracket{0}{C}{0}$ in a particularly concise form. Assume that $C$ begins and ends with a column of Hadamards, i.e.\ is of the form
\[
\Qcircuit @C=1em @R=.7em {
 & \gate{H} & \multigate{2}{C'} & \gate{H} & \qw  \\
 & \gate{H} & \ghost{C'} & \gate{H} & \qw \\
 & \gate{H} & \ghost{C'} & \gate{H} & \qw 
}
\]
for some circuit $C'$. This is without loss of generality, as we can always add pairs of Hadamards to the beginning or end of each line without changing the unitary operator corresponding to the circuit. Further assume that $C'$ contains at least one gate acting on each qubit. Let $h$ be the number of internal Hadamard gates that $C$ contains, i.e.\ the number of Hadamards in $C'$. Set $n = h + \ell$ and define a polynomial $f_C:\{0,1\}^n \rightarrow \{0,1\}$ over $\F_2$ as follows. Divide each horizontal wire of the internal part $C'$ into segments, with each segment corresponding to a portion of the wire which is either between two Hadamard gates or to the left/right of all the Hadamard gates. Associate a distinct variable $x_i$ with each segment of each wire. Observe that there are exactly $h+\ell$ variables in total. Each Hadamard gate now joins two segments and associates their corresponding variables, and each Z, CZ, CCZ gate is associated with one, two or three (respectively) variables, corresponding to the segments on which it acts. For each set of variables $x_{i_1},\dots,x_{i_k}$ associated with each gate, add the corresponding term $x_{i_1} \dots x_{i_k}$ to $f_C$. As we are working over $\F_2$, all addition and multiplication in $f_C$ is taken modulo 2. Note that this procedure never produces polynomials of degree higher than 3.

As a simple example of this construction, consider the labelled circuit $C'$ in Figure \ref{fig:labcirc}, where we use the notation
\[ \Qcircuit @C=1em @R=.7em { & \control \qw & \qw },\;\;\;\; \Qcircuit @C=1em @R=.7em { & \ctrl{1} \qw & \qw\\ & \control \qw & \qw },\;\;\;\; \Qcircuit @C=1em @R=.7em { & \ctrl{2} \qw & \qw\\ & \control \qw & \qw \\  & \control \qw & \qw } \]
for Z, CZ, CCZ gates respectively.
%
%\[
%\Qcircuit @C=1em @R=.7em {
% & \gate{H} & \ctrl{1} & \qw & \ctrl{2} & \gate{H} & \qw  \\
% & \qw & \control \qw & \gate{H} & \control \qw & \qw & \qw\\
% & \gate{H} & \qw & \control \qw & \control \qw & \qw & \qw
%}
%\]
%
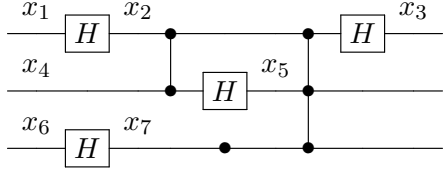
\begin{figure}
\[
\Qcircuit @C=1em @R=.7em {
& \ustick{x_1} \qw & \gate{H} & \ustick{x_2} \qw & \ctrl{1} & \qw & \qw & \ctrl{2} & \gate{H} & \ustick{x_3} \qw & \qw \\
& \ustick{x_4} \qw & \qw & \qw & \control \qw & \gate{H} & \ustick{x_5} \qw & \control \qw & \qw & \qw & \qw \\
& \ustick{x_6} \qw & \gate{H} & \ustick{x_7} \qw & \qw & \control \qw & \qw & \control \qw & \qw & \qw & \qw
}
\]
\caption{The internal part $C'$ of a circuit $C$ corresponding to the polynomial $x_1 x_2 + x_2 x_3 + x_4 x_5 + x_6 x_7 + x_2 x_4 + x_2 x_5 x_7 + x_7$.}
\label{fig:labcirc}
\end{figure}

We now show that the number of zeroes of the polynomial corresponding to $C$ has a close connection to $\braket{0|C|0}$. To be more precise, $\braket{0|C|0}$ is proportional to $\gap(f_C)$, where the gap of a polynomial is the difference between the number of zeroes and ones of that polynomial:
\[ \gap(f_C) := \sum_{x \in \{0,1\}^n} (-1)^{f_C(x)} = |\{x:f_C(x)=0\}|-|\{x:f_C(x)=1\}|. \]
A similar result was shown in~\cite{dawson05} for circuits containing Hadamard and Toffoli gates. However, the argument here seems somewhat simpler. Although there are several ways that the following result can be proven, we choose to highlight a connection to the beautiful results of~\cite{bremner11}.

\begin{prop}
\label{prop:gap}
Let $C$ be a quantum circuit on $\ell$ qubits consisting of Hadamard, Z, CZ and CCZ gates, starting and ending with a column of Hadamard gates, and containing $h$ internal Hadamard gates. Then
\[ \bracket{0}{C}{0} = \frac{\gap(f_C)}{2^{h/2+\ell}}. \]
\end{prop}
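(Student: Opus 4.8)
The plan is to massage $C$ into \emph{IQP} form --- a round of Hadamards, a diagonal operator $D$, and a second round of Hadamards --- and then to read off the amplitude using the formula of~\cite{bremner11}. Recall that if $D=\sum_{x}(-1)^{g(x)}\ket{x}\bra{x}$ is diagonal on $n$ qubits, then $\bra{0^n}H^{\otimes n}D H^{\otimes n}\ket{0^n}=2^{-n}\sum_{x\in\{0,1\}^n}(-1)^{g(x)}=\gap(g)/2^n$, which is immediate from $H^{\otimes n}\ket{0^n}=2^{-n/2}\sum_x\ket{x}$. So the whole task reduces to putting $C$ into this shape, tracking the scalar factors that appear, and verifying that the diagonal phase function produced is exactly $f_C$.

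First I would remove the internal Hadamards one at a time using the one-bit-teleportation gadget, which rewrites a Hadamard appearing in the middle of a circuit using a single CZ gate on a fresh ancilla together with Hadamards at the boundary. A direct calculation gives $(\bra{+}\otimes I)\,\mathrm{CZ}\,(\ket{\psi}\otimes\ket{+})=\tfrac{1}{\sqrt2}H\ket{\psi}$, so each internal Hadamard may be replaced by preparing a new ancilla in $\ket{+}=H\ket{0}$, applying a CZ between the wire and the ancilla, and projecting the old wire onto $\bra{+}=\bra{0}H$, at the cost of a factor $\sqrt2$. Carrying this out for all $h$ internal Hadamards produces a circuit on $n=h+\ell$ qubits --- one per segment --- in which every qubit carries exactly one Hadamard at each end (either inherited from the original boundary columns or supplied by a gadget) and all intermediate gates are diagonal; collecting terms gives precisely the IQP form $H^{\otimes n}D H^{\otimes n}$, and multiplies the amplitude by $2^{h/2}$.

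It remains to identify $D$. Its qubits are in bijection with the segments, hence with the variables of $f_C$. The original Z, CZ and CCZ gates become diagonal gates on the corresponding segment-qubits, contributing exactly the monomials $x_{i_1}\cdots x_{i_k}$ that the construction assigns to them, while each gadget CZ links the two consecutive segment-qubits joined by its internal Hadamard and contributes the associated degree-$2$ monomial. Adding these phases modulo $2$ reproduces $f_C$, so $D=\sum_x(-1)^{f_C(x)}\ket{x}\bra{x}$. Combining the gadget factor with the IQP formula yields $\bracket{0}{C}{0}=2^{h/2}\cdot\gap(f_C)/2^{n}=\gap(f_C)/2^{h/2+\ell}$, as claimed.

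The step needing most care is the reduction: I must check that each gadget CZ is placed at the right moment so that it links precisely the pair of segments meeting at its Hadamard, that the ancilla preparations and projections merge cleanly with the original initial and final Hadamard columns to give a genuine $H^{\otimes n}(\cdots)H^{\otimes n}$, and that the $h$ factors of $\sqrt2$ are counted correctly. As a sanity check --- and as a self-contained alternative not invoking~\cite{bremner11} --- one can instead expand every Hadamard directly via $\bra{a}H\ket{b}=\tfrac1{\sqrt2}(-1)^{ab}$, turning $\bracket{0}{C}{0}$ into a sum over assignments of basis values to all $n$ segments: each of the $2\ell+h$ Hadamards contributes $2^{-1/2}$, for a prefactor $2^{-(2\ell+h)/2}=2^{-(h/2+\ell)}$, and the accumulated sign is $(-1)^{f_C(x)}$, giving the same result.
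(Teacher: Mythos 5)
Your proof is correct and follows essentially the same route as the paper's: the same CZ-plus-ancilla gadget for eliminating internal Hadamards (the paper's Figure~\ref{fig:posts}, borrowed from~\cite{bremner11}), the same $2^{h/2}$ bookkeeping, and the same reduction to the purely diagonal case where the amplitude is $\gap(f_C)/2^n$. The direct sum-over-paths expansion you sketch at the end, using $\bra{a}H\ket{b}=2^{-1/2}(-1)^{ab}$, is a valid self-contained alternative, but since your main argument matches the paper's there is nothing further to check.
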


\begin{proof}
First consider the case where the internal part $C'$ of $C$ does not contain any Hadamard gates (as treated in~\cite[Appendix B]{bremner15}). Let $Z_i$ denote a Z gate acting on the $i$'th qubit (and similarly $CZ_{ij}$, $CCZ_{ijk}$). Then, for any $x \in \{0,1\}^\ell$, $\braket{x|Z_i|x} = (-1)^{x_i}$, $\braket{x|CZ_{ij}|x} = (-1)^{x_i x_j}$, $\braket{x|CCZ_{ijk}|x} = (-1)^{x_i x_j x_k}$. As these gates are diagonal, we can obtain $\braket{x|C'|x}$ simply by multiplying the expressions $\braket{x|G|x}$ for different gates $G$ in $C'$. Each gate corresponds to a term in $f_C$ as defined above. So, for all $x \in \{0,1\}^\ell$, $\braket{x|C'|x} = (-1)^{f_C(x)}$, and hence
\[ \braket{0|H^{\otimes \ell}C'H^{\otimes \ell}|0} = \frac{1}{2^\ell} \sum_{x \in \{0,1\}^\ell} \braket{x|C'|x} = \frac{1}{2^\ell} \sum_{x \in \{0,1\}^\ell} (-1)^{f_C(x)} = \frac{\gap(f_C)}{2^\ell}. \]
We can remove any Hadamard gates in $C'$ using a trick from~\cite{bremner11}. Imagine we have a Hadamard gate on the $i$'th qubit. We form a new overall circuit $C''$ from $C$ by introducing a new ancilla qubit $a$ initialised in the state $\ket{0}$, replacing the Hadamard gate with the gadget $G = H_i CZ_{ai} H_a$, and changing all subsequent gates involving the $i$'th qubit to use qubit $a$ (see Figure \ref{fig:posts} for an illustration). Then, by direct calculation, $\bra{0}_i G \ket{0}_a = H / \sqrt{2}$, so $\braket{0|C''|0} = \braket{0|C|0}/\sqrt{2}$. Following this procedure for each of the $h$ Hadamard gates in $C'$, we obtain a circuit on $n = \ell+h$ qubits, where each Hadamard gate corresponds to a product of two variables and relabelling of a qubit as specified in the definition of $f_C$. Taking into account the normalisation factor of $2^{h/2}$, we obtain
\[ \bracket{0}{C}{0} = \frac{1}{2^{h/2+\ell}} \sum_{x \in \{0,1\}^n} (-1)^{f_C(x)} = \frac{\gap(f_C)}{2^{h/2+\ell}} \]
as claimed.
\end{proof}

\begin{figure}
\[
\Qcircuit @C=1em @R=.7em {
\lstick{\dots} & \gate{U} & \gate{H} & \gate{V} & \rstick{\dots} \qw
}
\;\;\;\;\;\;\;\;\;\;
\mapsto
\;\;\;\;\;\;\;\;\;\;
\raisebox{0.5cm}{
\Qcircuit @C=1em @R=.7em {
\lstick{\dots} & \gate{U} & \ctrl{1} & \gate{H} & \rstick{\bra{0}} \qw  \\
\lstick{\ket{0}} & \gate{H} & \control \qw & \gate{V} & \rstick{\dots} \qw
}
}
\]
\caption{Replacing a Hadamard gate with a controlled-Z gate and postselection~\cite{bremner11}.}
\label{fig:posts}
\end{figure}
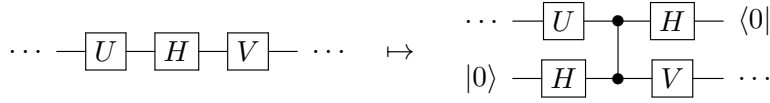

It is easy to check that the formula of Proposition \ref{prop:gap} is accurate for the example in Figure \ref{fig:labcirc} (where $\gap(f_C)=16$ and $\bracket{0}{C}{0} = 1/2$). The correspondence between circuits and polynomials given in Proposition \ref{prop:gap} will be the main tool used throughout this paper. We remark that all the other amplitudes $\braket{x|C|y}$, $x,y \in \{0,1\}^\ell$, are also related to polynomials. This is because X gates inserted at the start or end of $C$ can be used to map $\ket{0} \mapsto \ket{y}$ or $\ket{x} \mapsto \ket{0}$, X gates can be commuted through Hadamard gates to produce Z gates, and Z gates give linear terms in the corresponding polynomial. Thus $\braket{x|C|y} = \gap(f_C + L_{x,y})/2^{h/2+\ell}$ for some linear function $L_{x,y}$ depending on $x$, $y$.

We next make some other simple observations that follow from the circuit-polynomial correspondence.

% ------------------------------------------------------------------------------

\subsection{Basic observations}

\begin{obs}
There can be more than one quantum circuit $C$ corresponding to a given polynomial $f_C$.
\end{obs}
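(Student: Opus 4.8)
The plan is to refute injectivity of the map $C \mapsto f_C$ by exhibiting two genuinely different circuits that produce the same polynomial. The key observation driving the search is that $f_C$ records only the \emph{incidence data} of the construction---which segments each gate touches---and then reduces all coefficients modulo $2$. It therefore forgets the order of the gates and collapses repeated terms, so one expects the encoding to be many-to-one for several independent reasons; it suffices to pin down one clean instance.

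The cleanest example I would present realises the single polynomial $x_1 x_2$ in two ways. First, take $C'$ to be a single CZ gate on two qubits with no internal Hadamards: the two wires carry one segment each, with variables $x_1$ and $x_2$, and the CZ gate contributes the term $x_1 x_2$, so $f_C = x_1 x_2$ with $\ell = 2$, $h = 0$. Second, take $C'$ to be a single internal Hadamard on one qubit: this Hadamard splits the wire into two segments $x_1, x_2$ and, per the construction, joins them into the term $x_1 x_2$, so again $f_C = x_1 x_2$, now with $\ell = 1$, $h = 1$. Both circuits satisfy the standing assumptions (each begins and ends with a column of Hadamards, and $C'$ touches every qubit), and they are patently distinct: they act on different numbers of qubits and are manifestly different circuits.

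Verifying the claim is then just a direct reading of the construction of $f_C$, with no real obstacle. The one point worth flagging is to confirm that the two circuits are truly different rather than trivially equal; here that is immediate, and Proposition \ref{prop:gap} sharpens it further, since the two circuits have different $(h,\ell)$ and hence different amplitudes $\braket{0|C|0}$ (namely $1/2$ versus $1/\sqrt{2}$) despite sharing $f_C$. This incidentally shows that the polynomial alone does not even determine the amplitude. If one instead prefers an example on a fixed variable set, the most elementary source of non-uniqueness is mod-$2$ cancellation: inserting a redundant pair of Z gates on a single segment adds $x_i + x_i = 0$ to $f_C$, so the circuit changes while its polynomial does not.
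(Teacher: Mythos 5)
Your proof is correct and matches the paper's own argument essentially exactly: the paper's Figure \ref{fig:samecircuits} exhibits precisely your two circuits (a single CZ on two qubits versus a single internal Hadamard on one qubit, both yielding $x_1 x_2$), and it also mentions gate reordering as a second source of non-uniqueness, which you echo. Your added observation that the two circuits even have different amplitudes ($1/2$ versus $1/\sqrt{2}$) is a nice, correct bonus but not needed.
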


\begin{proof}
There are two easy ways to see this. First, as Z, CZ and CCZ gates commute, a consecutive sequence of such gates in $C$ can be reordered arbitrarily while still corresponding to the same polynomial $f_C$. Second, it is sometimes the case that CZ gates and Hadamards are interchangeable. For example, Figure \ref{fig:samecircuits} shows two circuits which both correspond to the polynomial $x_1 x_2$.
\end{proof}

\begin{figure}
\[
\begin{array}{m{1in}m{1in}}
\Qcircuit @C=1em @!R {
& \ustick{x_1} \qw & \ctrl{1} \qw & \qw & \qw \\
& \ustick{x_2} \qw & \control \qw & \qw & \qw\\
}
&
\;\;\;\;
\Qcircuit @C=1em @R=.7em {
& \ustick{x_1} \qw & \gate{H} \qw & \ustick{x_2} \qw & \qw
}
\end{array}
\]
\caption{The internal part of two circuits which both correspond to the polynomial $x_1 x_2$.}
\label{fig:samecircuits}
\end{figure}
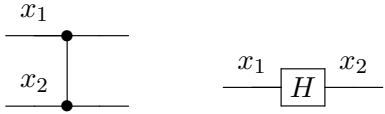

\begin{obs}
\label{obs:iqp}
For every degree-3 polynomial $f:\{0,1\}^n \rightarrow \{0,1\}$ with no constant term, there exists a quantum circuit $C$ on $n$ qubits such that $f = f_C$.
\end{obs}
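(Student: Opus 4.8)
The plan is to exploit the fact that, by the very definition of $f_C$, the map from circuits to polynomials is surjective onto degree-3 polynomials with no constant term, essentially because each possible type of monomial has a dedicated diagonal gate. So I would simply read the required circuit off from the monomials of $f$.

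First I would write $f$ as a sum of distinct monomials. Over $\F_2$ we have $x_i^2 = x_i$, so every monomial is a product of distinct variables, and since $f$ has degree at most $3$ with no constant term, each monomial has the form $x_i$, $x_i x_j$, or $x_i x_j x_k$. The key correspondence, which is exactly the dictionary underlying Proposition \ref{prop:gap}, is that a single Z gate on qubit $i$ contributes the term $x_i$ to $f_C$, a single CZ gate on qubits $i,j$ contributes $x_i x_j$, and a single CCZ gate on qubits $i,j,k$ contributes $x_i x_j x_k$.

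I would then let $C'$ be the diagonal circuit on $n$ qubits obtained by placing, for each monomial occurring in $f$, the corresponding Z, CZ or CCZ gate; since these gates are diagonal they commute, so their relative order is immaterial. Setting $C = H^{\otimes n} C' H^{\otimes n}$ yields a circuit of the required form (indeed an IQP circuit). Because $C'$ contains no internal Hadamard gates, we have $h=0$, so each wire forms a single segment carrying exactly one variable $x_i$; hence $C$ involves precisely $n$ variables, matching the $n$ qubits. Tracing through the construction of $f_C$, each gate I inserted contributes exactly its associated monomial and nothing more, whence $f_C = f$, as required.

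The only point needing care is the technical assumption that $C'$ act nontrivially on every qubit, which can fail when some variable $x_i$ does not appear in $f$. I expect this to be the sole subtlety rather than a genuine obstacle: on any such wire one can insert two Z gates, which over $\F_2$ contribute $x_i + x_i = 0$ and so leave $f_C$ unchanged while ensuring the wire carries a gate. Thus no real difficulty arises, and the statement follows immediately from the circuit-polynomial correspondence.
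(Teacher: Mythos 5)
Your proposal is correct and follows essentially the same route as the paper: read the circuit off the monomials by placing a Z, CZ or CCZ gate for each degree-1, 2 or 3 term respectively, with no internal Hadamards so that each qubit carries a single variable. Your extra remark about padding gate-free wires with a pair of Z gates (contributing $x_i + x_i = 0$) is a reasonable way to handle the technical convention that every qubit carries a gate, which the paper's one-line proof leaves implicit.
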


\begin{proof}
Produce the internal part of a circuit $C$ on $n$ qubits by associating a qubit with each variable in $f$, and include a Z, CZ or CCZ gate between the qubits corresponding to each degree 1, 2, 3 term (respectively) in $f$.
\end{proof}

We remark that the class of quantum circuits produced from the procedure in Observation \ref{obs:iqp} are IQP circuits~\cite{shepherd09}. An IQP circuit (``Instantaneous Quantum Polynomial-time'') on $n$ qubits is a circuit of the form $H^{\otimes n}DH^{\otimes n}$, where $D$ is a circuit of $\poly(n)$ diagonal gates. It was argued in~\cite{bremner15} that it should be hard to sample classically from the output probability distributions of quantum circuits of the form of Observation \ref{obs:iqp}, even up to small total variation distance. The argument was based on a plausible complexity-theoretic conjecture regarding the complexity of approximately computing $\gap(f)$ for random degree-3 polynomials $f$.

\begin{obs}
\label{obs:largew}
There exists a degree-3 polynomial $f:\{0,1\}^n \rightarrow \{0,1\}$ such that every quantum circuit $C$ corresponding to $f$ requires $n$ qubits.
\end{obs}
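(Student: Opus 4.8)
The plan is to exhibit a single explicit polynomial and prove a matching lower bound on the qubit count; the upper bound (that $n$ qubits always suffice) is already given by Observation \ref{obs:iqp}, so the entire content of Observation \ref{obs:largew} is to force $\ell \geq n$ for a suitable $f$. The structural fact I would isolate first is that, in the circuit--polynomial correspondence, a degree-3 monomial can be produced \emph{only} by a CCZ gate: Hadamard and CZ gates each contribute a degree-2 term and Z a degree-1 term, so the degree-3 part of $f_C$ is exactly the sum, over the CCZ gates of $C'$, of the products of their three segment-variables. Consequently, if a monomial $x_i x_j x_k$ survives in the degree-3 part of $f$, then at least one CCZ gate acts on the three segments labelled $x_i, x_j, x_k$.

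Next I would invoke the geometric meaning of a CCZ gate: it acts on three \emph{distinct} wires at a single instant, so the three segments it touches lie on three distinct wires. Combining this with the previous paragraph gives the key lemma: any two variables that co-occur in a degree-3 monomial of $f$ must be carried by different wires in any circuit $C$ with $f_C = f$.

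With this in hand the construction is immediate. For $n \geq 3$ take $f = \sum_{1 \le i < j < k \le n} x_i x_j x_k$, the degree-3 elementary symmetric polynomial, in which every monomial $x_i x_j x_k$ appears with coefficient $1$. Every pair $\{i,j\}$ with $i \neq j$ occurs inside some monomial (choose any third index $k$, which exists since $n \geq 3$), so by the lemma no two of the $n$ variables can share a wire. Hence any circuit realizing $f$ has at least $n$ wires, i.e. $\ell \geq n$; since $\ell = n - h \leq n$ we in fact obtain $\ell = n$ and $h = 0$, so the minimum over all realizing circuits is exactly $n$.

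The one point requiring care -- and the step I would treat as the crux -- is the behaviour of cancellation over $\F_2$: I must ensure the monomials I rely on genuinely survive in $f$ rather than being cancelled by an even number of contributing CCZ gates. This is exactly why I take $f$ to be the sum of \emph{all} degree-3 monomials, so that each $x_i x_j x_k$ is present and therefore forces an odd (hence nonzero) number of CCZ gates on that triple. The distinct-wire conclusion then applies to every triple, and hence to every pair, which is all that is needed.
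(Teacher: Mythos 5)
Your proof is correct, and it uses the same witness polynomial as the paper ($f=\sum_{1\le i<j<k\le n}x_ix_jx_k$) but a genuinely different lower-bound argument. The paper argues from the \emph{absence of degree-2 terms}: each internal Hadamard joins two consecutive segments of a wire and contributes the corresponding degree-2 monomial (which cannot be cancelled by a CZ term, since CZ terms pair segments on distinct wires), so a polynomial with no degree-2 part forces $h=0$ and hence $\ell=n$. You instead argue from the \emph{structure of the degree-3 part}: every surviving triple must be realized by an odd number of CCZ gates, each acting on three distinct wires, so any two variables co-occurring in a surviving cubic monomial must sit on different wires; since every pair co-occurs in some triple, all $n$ variables occupy distinct wires. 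Your route buys two things. First, it is robust to the low-degree part of the polynomial -- it shows that \emph{any} $f$ whose cubic part is the elementary symmetric polynomial has width $n$, even if arbitrary degree-$\le 2$ terms are added, whereas the paper's argument collapses as soon as a degree-2 term appears. Second, you make the $\F_2$-cancellation issue explicit (an odd number of contributing CCZ gates is still at least one), a point the paper's one-line proof glosses over for its Hadamard terms. The paper's version is shorter; yours is the more general and more carefully justified of the two.
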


\begin{proof}
Consider the polynomial containing the term $x_i x_j x_k$ for all $1 \le i<j<k \le n$, and no other terms. As there are no degree-2 terms, any corresponding circuit $C$ cannot contain any internal Hadamard gates. Thus $C$ must act on at least $n$ qubits, with one qubit corresponding to each variable.
\end{proof}

\begin{obs}
If $f_C:\{0,1\}^n \rightarrow \{0,1\}$ corresponds to a quantum circuit $C$ on $\ell$ qubits, then $|\gap(f_C)| \le 2^{n/2 + \ell/2}$.
\end{obs}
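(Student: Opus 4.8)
The plan is to read off the bound directly from Proposition \ref{prop:gap}. That proposition gives
\[ \gap(f_C) = 2^{h/2+\ell}\,\bracket{0}{C}{0}, \]
where $h$ is the number of internal Hadamard gates. Since by definition $n = h + \ell$, we have $h = n - \ell$, and therefore the exponent simplifies as $h/2 + \ell = (n-\ell)/2 + \ell = n/2 + \ell/2$. Substituting this in, the claim reduces to showing $|\bracket{0}{C}{0}| \le 1$.

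The key observation is that $C$ is a unitary operator. Hence $C\ket{0}$ is a unit vector, and $\ket{0}$ is a unit vector, so by Cauchy--Schwarz $|\bracket{0}{C}{0}| \le \|C\ket{0}\| \cdot \|\ket{0}\| = 1$. (Equivalently, $\bracket{0}{C}{0}$ is a single entry of a unitary matrix, and every entry of a unitary matrix has modulus at most $1$.) Multiplying through by the normalisation factor then yields
\[ |\gap(f_C)| = 2^{n/2+\ell/2}\,|\bracket{0}{C}{0}| \le 2^{n/2+\ell/2}, \]
which is exactly the desired inequality.

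I expect there to be no genuine obstacle here: this is an immediate corollary of Proposition \ref{prop:gap} together with the elementary fact that amplitudes of unitary circuits are bounded in modulus by $1$. The only point worth stating explicitly is the arithmetic identity $h/2 + \ell = n/2 + \ell/2$ coming from $n = h + \ell$, so that the exponent in Proposition \ref{prop:gap} is rewritten in terms of $n$ and $\ell$ as required by the statement.
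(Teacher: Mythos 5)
Your proposal is correct and is essentially identical to the paper's own proof: both apply Proposition \ref{prop:gap}, use $n = h+\ell$ to rewrite the exponent as $n/2+\ell/2$, and bound $|\bracket{0}{C}{0}| \le 1$ by unitarity. No issues.
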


\begin{proof}
From Proposition \ref{prop:gap}, $\braket{0|C|0} = \gap(f_C) / 2^{h/2+\ell}$. As $\braket{0|C|0}$ is a quantum circuit amplitude and hence bounded by 1 in absolute value by unitarity, $|\gap(f_C)| \le 2^{h/2+\ell} = 2^{n/2+\ell/2}$.
\end{proof}

These observations motivate us to define the {\em quantum circuit width} $w(f)$ of a degree-3 polynomial $f$ over $\F_2$ as the minimal number of qubits required for any quantum circuit which corresponds to $f$. For example, the family of polynomials $f$ in Observation \ref{obs:largew} has $w(f) = n$, whereas the polynomial $f' = x_1 x_2 + x_2 x_3 + \dots + x_{n-1} x_n$ has $w(f') = 1$, corresponding to a circuit whose internal part consists of $n-1$ Hadamard gates applied to one qubit.

% ------------------------------------------------------------------------------

\section{Computational complexity}
\label{sec:compcomp}

The theory of computational complexity studies the inherent difficulty of computational problems. One of the main goals of this field is to classify problems into complexity classes: sets of problems of comparable difficulty. We now give a brief, informal introduction to this area; see~\cite{papadimitriou94,arora09} for a full, formal treatment. The complexity classes used in this work can all be presented in terms of determining properties of classical or quantum circuits. A classical circuit is a collection of AND, OR and NOT gates connected with wires, which map an input to an output by evaluating the gates in the natural manner. We assume that classical circuits only have one output bit, but potentially many input bits. For each classical circuit $C$, we let $C(x)$ be the output of $C$ given the bit-string $x$ as input. Then we can define the following natural problems:
\begin{itemize}
%\item {\sc Circuit Value}: given a classical circuit $C$ and an input to the circuit $x$, determine whether $C(x) = 1$.
\item Circuit SAT: given a classical circuit $C$, determine whether there exists $x$ such that $C(x) = 1$.
\item Circuit Counting: given a classical circuit $C$, output $|\{x:C(x)=1\}|$.
\end{itemize}
Each of these problems corresponds to a complexity class.
%The class P (``polynomial-time'') is the class of decision problems which reduce to the {\sc Circuit Value} problem.
NP (``nondeterministic polynomial-time'') is the class of decision problems which reduce to Circuit SAT in polynomial time, while \#P (``sharp-P'' or ``number-P'') is the class of functional problems which can be expressed as an instance of Circuit Counting. The closely related class P$^{\#\text{P}}$ is the class of functional problems which can be solved in polynomial time, given the ability to solve any problem in the class \#P. For example, the problem of computing $|\{x:C_1(x)=1\}|-|\{x:C_2(x)=1\}|$ for circuits $C_1$, $C_2$ is in P$^{\#\text{P}}$. Here ``polynomial time'' is short for ``in time polynomial in the input size'', which is the key notion of efficiency used in computational complexity. For any complexity class $\mathcal{C}$, a problem $\mathcal{P}$ is said to be $\mathcal{C}$-hard if it is at least as hard as every problem in $\mathcal{C}$: in other words, for every problem in $\mathcal{C}$, there is a polynomial-time reduction from that problem to $\mathcal{P}$.

A problem is said to be NP-complete if it is equivalent in difficulty to Circuit SAT, up to polynomial-time reductions. Many important practical problems (such as optimal packing and scheduling, integer programming, and computing ground-state energies of classical physical systems) are known to be NP-complete~\cite{garey79}. The famous P$\stackrel{?}{=}$NP problem effectively asks whether Circuit SAT can be solved in time polynomial in the size of the given circuit. Although it is widely believed that the answer is ``no'', a positive answer would have momentous consequences, implying that any NP-complete problem could be solved in polynomial time. Observe that Circuit Counting is at least as hard as Circuit SAT. In fact, it is conjectured that this problem is much harder. Indeed, if there existed an efficient reduction from Circuit Counting to Circuit SAT, then the infinite tower of complexity classes known as the polynomial hierarchy would collapse~\cite{toda91}, a consequence similar to P$=$NP and considered almost as unlikely.

Many interesting problems in physics and elsewhere are known to be \#P-hard: at least as hard as any problem in \#P. These include computing Ising model partition functions~\cite{jerrum93}, evaluating Jones and Tutte polynomials~\cite{jaeger90}, and exactly computing the permanent of a 0-1 matrix~\cite{valiant79}. The intuitive reason behind the hardness of these problems is that they involve computing a sum of exponentially many terms. However, surprisingly, in some cases such sums can be computed efficiently (exactly or approximately). Examples include exact computation of Ising model partition functions on planar graphs~\cite{fisher61,kasteleyn63,temperley61}, approximate computation of the permanent of a non-negative matrix~\cite{jerrum04}, and Valiant's quantum-inspired ``holographic algorithms'' for combinatorial problems~\cite{valiant08a}. Proving \#P-hardness of a problem provides strong evidence that a clever efficient algorithm like these should not exist for that problem.

% ------------------------------------------------------------------------------

\subsection{Computational complexity of low-degree polynomials}
\label{sec:phard}

We can use the connection between quantum circuits and polynomials to prove \#P-hardness results. It was shown by Ehrenfeucht and Karpinski~\cite{ehrenfeucht90} that computing the number of zeroes (equivalently, the gap) of a degree-3 polynomial $f$ over $\F_2$ is \#P-hard. This implies that using the circuit-polynomial correspondence is unlikely to give an efficient algorithm for simulating all quantum circuits classically by computing quantum circuit amplitudes. However, we can go in the other direction, and use the correspondence to obtain a quantum proof of \#P-hardness of computing the number of zeroes of $f$ (equivalently, computing $\gap(f)$).

\begin{prop}
\label{prop:phard}
It is \#P-hard to compute $\gap(f)$ for degree-3 polynomials $f$.
\end{prop}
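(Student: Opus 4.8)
The plan is to reduce the canonical \#P-complete problem, Circuit Counting, to the computation of $\gap(f)$ by running the correspondence of Proposition \ref{prop:gap} in reverse: starting from a hard counting instance, build a quantum circuit $C$ over $\mathcal{F}$ whose amplitude $\braket{0|C|0}$ encodes the desired count, and then read off that amplitude from $\gap(f_C)$. The key enabling fact is that the gate set $\mathcal{F}$ can implement arbitrary reversible classical computation: a Toffoli gate is a CCZ conjugated by Hadamards on its target, a CNOT is a CZ conjugated similarly, and an X (NOT) gate equals $HZH$. Hence every Boolean circuit can be turned into a circuit over $\mathcal{F}$ that computes the same function on basis states, and the resulting $f_C$ is automatically of degree at most $3$.

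Concretely, first I would take an arbitrary instance of Circuit Counting: a classical circuit $g$ on $m$ input bits, with the goal of computing $N = |\{x : g(x) = 1\}|$. Using the standard compute-copy-uncompute trick I would turn $g$ into a \emph{garbage-free} reversible circuit $R_g$ over $\{$Toffoli, CNOT, NOT$\}$, so that $R_g \ket{x}\ket{0^a}\ket{0} = \ket{x}\ket{0^a}\ket{g(x)}$ on an ancilla register and a single output qubit. I would then form
\[ C = (I \otimes I \otimes X)\,(H^{\otimes m} \otimes I \otimes I)\, R_g\, (H^{\otimes m} \otimes I \otimes I), \]
rewriting each Toffoli, CNOT, NOT and the final $X$ in terms of $\mathcal{F}$ as above. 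A direct calculation shows that the first Hadamard layer prepares a uniform superposition over inputs, $R_g$ writes $g(x)$ onto the output qubit while returning the ancillas to $\ket{0}$, and the final Hadamard layer together with the $X$ projects onto the all-zero state, giving $\braket{0|C|0} = N/2^m$.

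Finally, applying Proposition \ref{prop:gap} to $C$ yields $\braket{0|C|0} = \gap(f_C)/2^{h/2+\ell}$ for the degree-$3$ polynomial $f_C$, with $h$ and $\ell$ read off from $C$. Equating this with $N/2^m$ shows that $N$ is recovered from $\gap(f_C)$ by multiplication by an explicit, efficiently computable power of two. Since the map $g \mapsto C \mapsto f_C$ is a polynomial-time reduction and Circuit Counting is complete for \#P by the definition used above, any algorithm computing $\gap$ of degree-$3$ polynomials would solve every problem in \#P, establishing \#P-hardness.

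The parts I expect to require the most care are twofold. First, the reduction really does need $R_g$ to be genuinely garbage-free: if any ancilla were left in a nonzero state, the final amplitude would count only those $x$ whose ancillas happen to return to $\ket{0}$ rather than all satisfying inputs, so the Bennett uncomputation step is essential rather than cosmetic. Second, I would need to check that $C$ satisfies the structural hypotheses of Proposition \ref{prop:gap} — beginning and ending with a column of Hadamards and acting nontrivially on every qubit — which can be arranged by padding with Hadamard pairs, and that the accumulated factors of $\sqrt{2}$ from $h$ and the Hadamard conjugations combine into a genuine integer relation between $\gap(f_C)$ and $N$.
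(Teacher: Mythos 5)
Your proof is correct and follows essentially the same route as the paper: reduce Circuit Counting to an amplitude $\braket{0|C|0}$ by realising the classical circuit reversibly with Toffoli/CNOT/NOT gates, sandwiching it between Hadamard layers, translating to the gate set $\{H, Z, CZ, CCZ\}$, and invoking Proposition~\ref{prop:gap}. The only (immaterial) difference is the readout gadget: you flip the output qubit and take a computational-basis overlap to obtain $N/2^m$ directly, whereas the paper puts the output qubit in $\ket{-}$ and uses phase kickback to obtain $\gap(g)/2^n$; the two quantities are interconvertible via $\gap(g)=2^m-2N$.
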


\begin{proof}
We will show that the problem of exactly computing $\braket{0|C|0}$ for an arbitrary quantum circuit $C$ containing Hadamard, Z, CZ, and CCZ gates is \#P-hard. As computing $\gap(f)$ for arbitrary degree-3 polynomials $f$ would allow us to compute $\braket{0|C|0}$ for arbitrary circuits of this form, this will imply the claim. To achieve this, we first show that computing $\braket{0|C|0}$ for an arbitrary quantum circuit $C$ containing Hadamard, X and Toffoli gates is \#P-hard. This can easily be obtained from a similar result of Van den Nest~\cite{vandennest08}; we include a simple direct proof here for completeness.

It is a fundamental result in the theory of reversible computation that X and Toffoli gates together with ancillas are universal for classical computation, i.e.\ that given a boolean function $g:\{0,1\}^n \rightarrow \{0,1\}$ computed by a classical circuit $C$ of $\poly(n)$ gates, there is a quantum circuit $C'$ of $\poly(n)$ X and Toffoli gates such that $C'\ket{x}_I\ket{0}_O\ket{0}^{\otimes a}_A = \ket{x}_I\ket{g(x)}_O\ket{0}^{\otimes a}_A$, where the circuit acts on a Hilbert space divided into an $n$-qubit input register I, a 1-qubit output register O, and an $a$-qubit ancilla register A. Then let the circuit $C''$ be defined as follows:
\begin{enumerate}
\item Apply an X gate to the O register.
\item Apply Hadamard gates to each qubit in the I and O registers.
\item Apply $C'$.
\item Apply Hadamard gates to each qubit in the I and O registers.
\item Apply an X gate to the O register.
\end{enumerate}
If $C''$ is applied to the initial state $\ket{0}$, the state prepared after the second step is $\ket{+}^{\otimes n}_I \ket{-}_O \ket{0}^{\otimes a}_A$. When $C'$ is applied in the third step the second and third registers are left unchanged, and the state of the first register becomes
\[ \ket{\psi_g} = \frac{1}{\sqrt{2^n}} \sum_{x \in \{0,1\}^n} (-1)^{g(x)} \ket{x}. \]
Thus $\braket{0|C''|0} = \braket{+|^{\otimes n} | \psi_g} = \frac{1}{2^n} \sum_{x \in \{0,1\}^n} (-1)^{g(x)} = \gap(g)/2^n$. So computing $\braket{0|C''|0}$ allows us to determine $\gap(g)$, and hence the number of zeroes of $g$, for functions $g$ computed by arbitrary polynomial-size classical circuits. This problem is \#P-hard by definition.

It remains to show that this same conclusion holds for circuits containing Hadamard, Z, CZ, and CCZ gates. But this is immediate, as Toffoli gates can be produced from CCZ gates by conjugating the target qubit by a Hadamard, and similarly $X = HZH$.
\end{proof}

%As shown by Ehrenfeucht and Karpinski~\cite{ehrenfeucht90}, computing the number of zeroes (equivalently, the gap) of a degree-3 polynomial is \#P-hard, so this does not appear to give us an efficient algorithm for simulating quantum circuits. (In fact, one can see the above construction as a {\em proof} that it is \#P-hard to compute $\gap(f)$! Using the classical universality of Toffoli gates, it is easy to construct a quantum circuit $C$ containing only Hadamard and Toffoli gates such that $\bracket{0}{C}{0}$ encodes a \#P-hard problem~\cite{vandennest08}, and CCZ gates are equivalent to Toffolis up to a Hadamard gate on the target qubit.)

The \#P-hardness proof of Ehrenfeucht and Karpinski~\cite{ehrenfeucht90} is not difficult. However, the quantum proof gives a different perspective, and also lends itself to simple generalisations. For example:

\begin{prop}
\label{prop:3terms}
$\gap(f)$ remains \#P-hard to compute for degree-3 polynomials where each variable appears in at most 3 terms.
\end{prop}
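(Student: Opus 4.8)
The plan is to reduce from the problem of computing $\gap(f)$ for an arbitrary degree-3 polynomial $f$, which is \#P-hard by Proposition \ref{prop:phard}. Given such an $f$, I would construct in polynomial time a new degree-3 polynomial $g$ in which every variable appears in at most 3 terms, together with an easily computable integer $M$, such that $\gap(f) = \gap(g)/2^M$. Since any ability to compute the gap of bounded-occurrence polynomials then computes $\gap(f)$, the hardness transfers. Notably this is a purely polynomial-level argument, needing no circuit; the engine is a simple \emph{equality gadget} built on the identity $\sum_{z \in \{0,1\}} (-1)^{z(y+y')} = 1 + (-1)^{y+y'}$, which equals $2$ when $y = y'$ and $0$ otherwise. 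Hence introducing a fresh variable $z$ together with the degree-2 terms $zy + zy'$ effectively forces $y = y'$, contributing a clean factor of $2$ to the gap.

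Concretely, for each variable $x$ of $f$ that appears in $k > 3$ terms $T_1,\dots,T_k$, I would replace the occurrence of $x$ in $T_j$ by a fresh copy $y_j$, and append the \emph{chain} of gadget terms $\sum_{j=1}^{k-1}(z_j y_j + z_j y_{j+1})$ with fresh auxiliaries $z_j$. The crucial bookkeeping is then: an interior copy $y_j$ appears in $T_j$, in $z_{j-1}y_j$, and in $z_j y_j$, i.e.\ exactly 3 terms; the endpoints $y_1, y_k$ appear in only 2; and each $z_j$ appears in exactly 2. Using a chain rather than a star is essential here, since a single central copy linked to all the others would itself occur in $k$ terms. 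Replacing $x$ by $y_j$ inside $T_j$ leaves the occurrence counts of the other variables of $T_j$ unchanged, so I would process the high-occurrence variables one at a time, after which every variable of $g$ lies in at most 3 terms (matching the bound of the statement exactly, which is why $3$ is the natural target for this gadget).

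For the gap, summing over the $z_j$ first yields the product $\prod_{j=1}^{k-1}\bigl(1 + (-1)^{y_j + y_{j+1}}\bigr)$, which equals $2^{k-1}$ when $y_1 = \cdots = y_k$ and vanishes otherwise; the surviving equal assignments recover $f$ with $x$ set to the common value. Thus each processed variable contributes a factor $2^{k-1}$, giving $\gap(g) = 2^M \gap(f)$ where $M$ is the total number of equality gadgets introduced, which is clearly computable in polynomial time from $f$.

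I expect the only real point requiring care to be the occurrence accounting: one must verify that distributing the equality constraints along a chain keeps every copy, auxiliary, and untouched original variable at or below 3 occurrences, and that sequential processing never pushes a previously handled variable back over the bound. The gap factorisation itself is routine once the gadget identity is in hand.
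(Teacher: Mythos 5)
Your proposal is correct, and it takes a genuinely different, purely polynomial-level route. The key points all check out: summing the auxiliaries $z_j$ out of $\sum_j z_j(y_j+y_{j+1})$ gives $\prod_{j=1}^{k-1}\bigl(1+(-1)^{y_j+y_{j+1}}\bigr)$, which is $2^{k-1}$ on the diagonal $y_1=\dots=y_k$ and $0$ otherwise, so $\gap(g)=2^M\gap(f)$ with $M=\sum_i(k_i-1)$; your occurrence accounting (3 for interior copies, 2 for endpoint copies and for each $z_j$, unchanged for all other variables) is right; and sequential processing is safe because handling one variable introduces only fresh variables and never alters the occurrence count of any other. The paper instead argues in the circuit picture: it takes the circuit $C$ corresponding to $f$, inserts a pair of Hadamard gates on each qubit between every two gates (harmless since $H^2=I$), and observes that in the resulting polynomial each variable labels a wire segment bounded by at most two Hadamards and touching at most one gate, hence appears in at most $2+1=3$ terms. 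The two arguments are really the same construction viewed through different lenses: a pair of consecutive Hadamards on a wire joining segments $a,b,c$ contributes the terms $x_ax_b+x_bx_c=x_b(x_a+x_c)$, which is precisely your equality gadget with $z=x_b$, and a wire carrying $k$ gate-occurrences of a variable becomes exactly your chain of $k-1$ gadgets. What your version buys is self-containment (no appeal to Proposition \ref{prop:gap} or the circuit formalism), an explicit scaling factor $2^M$ rather than ``an easily computed scaling factor'', and economy (gadgets only where a variable exceeds three occurrences); what the paper's version buys is brevity given the machinery already built, and an illustration of its central theme that circuit identities such as $H^2=I$ translate directly into gap-preserving polynomial rewrites.
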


\begin{proof}
We show that computing $\gap(f)$ for an arbitrary degree-3 polynomial $f$ reduces to computing $\gap(f')$ for a degree-3 polynomial $f'$ where each variable appears in at most 3 terms. Given $f$, we produce a corresponding quantum circuit $C$. Then, between each pair of gates, we insert two Hadamard gates on each qubit to produce a new circuit $C'$. As $H^2 = I$, $\braket{0|C'|0} = \braket{0|C|0}$, so the corresponding polynomial $f_{C'}$ satisfies $\gap(f_{C'}) = \gap(f_C)$, up to an easily computed scaling factor. But each variable in $f_{C'}$ is only contained within at most 3 terms, because the inserted Hadamard gates effectively relabel all the variables between each pair of terms in the polynomial.
%So, for each term in $f_C$ (e.g.\ $x_1 x_2 x_3$), each variable in that term appears in at most 3 terms in $f_{C'}$ (e.g.\ $x_1$ only appears in terms of the form $x_1 y_1 + x_1 x_2 x_3 + x_1 z_1$).
\end{proof}

A similar circuit simplification to that of Proposition \ref{prop:3terms} was previously observed in~\cite{rudolph09}. Proposition \ref{prop:phard} shows that we should not hope to find an efficient algorithm for simulating arbitrary quantum circuits by computing the number of zeroes of low-degree polynomials. However, for some classes of polynomials we can indeed obtain efficient algorithms (see below for some examples of this).

A natural question is whether we can improve Proposition \ref{prop:phard} to show that even computing the number of zeroes of degree-2 polynomials is \#P-hard. It was already shown by Ehrenfeucht and Karpinski~\cite{ehrenfeucht90} that this is unlikely to be the case, as there is a polynomial-time algorithm for this problem. There is an alternative ``quantum'' way of seeing this result, as relating to ideas around the well-known Gottesman-Knill theorem~\cite{nielsen00}, which states that any quantum circuit whose gates are all picked from the Clifford group can be efficiently simulated classically. Indeed, for any degree-2 polynomial $f:\{0,1\}^n \rightarrow \{0,1\}$, by Observation \ref{obs:iqp} we can write down a quantum circuit $C$ on $n$ qubits containing only Hadamard, Z and CZ gates such that $\braket{0|C|0} = \gap(f)/2^n$. As the gates in $C$ are all members of the Clifford group, the state $C\ket{0}$ is a stabilizer state, as is the state $\ket{0}$. It is known that the inner product between two arbitrary stabilizer states can be computed in time $O(n^3)$~\cite{aaronson04a,garcia14,bravyi16}, implying an $O(n^3)$ algorithm for computing $\gap(f)$ for degree-2 polynomials $f:\{0,1\}^n \rightarrow \{0,1\}$.

% ------------------------------------------------------------------------------

\subsection{Approximate computation}
\label{sec:approxcomplexity}

Given that we have shown exactly computing $\gap(f)$ to be hard, the next natural question is whether we can approximately compute it. We now show that this question is closely connected to {\em quantum} computational complexity. The class of decision problems which can be solved efficiently by a quantum computer (i.e.\ in time polynomial in the size of the input), with success probability $2/3$, is known as BQP~\cite{watrous09}. As with the classical complexity classes discussed previously, BQP can be expressed in terms of circuits; however, the circuits are now quantum. Any polynomial-time quantum computation solving a decision problem can be expressed as applying some quantum circuit $U$, generated from the input in polynomial time, to the initial state $\ket{0}$, then measuring the first qubit, and returning the measurement result.

\begin{prop}
\label{prop:gapbqp}
Determining $\gap(f)$ for arbitrary degree-3 polynomials $f:\{0,1\}^n \rightarrow \{0,1\}$ up to absolute error $\frac{1}{3} \cdot 2^{(n+w(f))/2}$ is BQP-hard.
%The following problem is PromiseBQP-complete. Given $c$ such that $1/\poly(n) \le c < 1$, a degree-3 polynomial $f:\{0,1\}^n \rightarrow \{0,1\}$ and a description of a minimal width quantum circuit representing $f$, determine $\gap(f)$ up to additive error $c 2^{(n+w(f))/2}$.
\end{prop}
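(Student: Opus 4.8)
The plan is to establish BQP-hardness by reducing an arbitrary BQP computation to the problem of approximating $\gap(f)$ for a suitable degree-3 polynomial. Recall that a BQP decision problem is solved by applying a polynomial-size quantum circuit $U$ to $\ket{0}$, measuring the first qubit, and returning the outcome; by the discussion following Observation \ref{obs:iqp} and the universality of $\mathcal{F}$, we may assume $U$ is built from Hadamard, Z, CZ and CCZ gates and begins and ends with a column of Hadamards. The acceptance probability is $p = \|(\bra{1}\otimes I)U\ket{0}\|^2$, and to decide the problem it suffices to distinguish the cases $p \ge 2/3$ and $p \le 1/3$.

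First I would express the relevant acceptance amplitude as a quantum circuit amplitude of the form handled by Proposition \ref{prop:gap}. The cleanest route is to form a new circuit $C$ by applying $U$, then an X gate on the output qubit, then $U^\dagger$ (which is again a circuit over $\mathcal{F}$, since each gate in $\mathcal{F}$ is its own inverse up to the Hadamard conjugation already present). Then $\braket{0|C|0} = \braket{0|U^\dagger X_1 U|0}$, and a short calculation shows this is an affine function of the acceptance probability $p$ — specifically $\braket{0|U^\dagger X_1 U|0} = 1 - 2p$ — so distinguishing $p \ge 2/3$ from $p \le 1/3$ amounts to distinguishing $\braket{0|C|0} \le -1/3$ from $\braket{0|C|0} \ge 1/3$, i.e.\ estimating $\braket{0|C|0}$ to additive error less than $1/3$. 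By Proposition \ref{prop:gap}, $\braket{0|C|0} = \gap(f_C)/2^{h/2+\ell}$ where $f_C$ is the associated degree-3 polynomial on $n = h+\ell$ variables and $C$ acts on $\ell$ qubits.

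Next I would convert an additive-error estimate of $\braket{0|C|0}$ into the additive-error bound on $\gap(f_C)$ demanded by the statement. Since $\braket{0|C|0} = \gap(f_C)/2^{h/2+\ell}$, an additive error of $\tfrac13$ in the amplitude corresponds to an additive error of $\tfrac13 \cdot 2^{h/2+\ell} = \tfrac13 \cdot 2^{(n+\ell)/2}$ in $\gap(f_C)$. To match the quantity $\tfrac13 \cdot 2^{(n+w(f))/2}$ appearing in the proposition, I would invoke the definition of the quantum circuit width: $w(f_C) \le \ell$ since $C$ is one circuit realising $f_C$. Hence the error budget $\tfrac13 \cdot 2^{(n+w(f_C))/2}$ allowed in the statement is no larger than $\tfrac13 \cdot 2^{(n+\ell)/2}$, so any algorithm achieving the promised accuracy for $\gap$ in particular achieves an amplitude estimate good enough to decide the original BQP problem. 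This completes the reduction.

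The main obstacle I expect is making the error-scaling accounting airtight, in particular the direction of the inequality relating $w(f_C)$ and $\ell$. The proposition's error threshold uses $w(f)$, the minimal qubit count over all circuits realising $f$, whereas the circuit $C$ produced by the reduction uses $\ell$ qubits, which may exceed $w(f_C)$; one must check that the smaller error budget $\tfrac13 \cdot 2^{(n+w(f_C))/2}$ is still at least as stringent as what is needed, rather than too generous, so that a $\gap$-approximator meeting the stated bound genuinely solves the BQP instance. The other point requiring care is ensuring the constructed $C$ over $\mathcal{F}$ still starts and ends with a column of Hadamards and contains a gate on each wire, so that Proposition \ref{prop:gap} applies verbatim; padding with Hadamard pairs as in the setup of Section \ref{sec:circpoly} handles this without changing the amplitude.
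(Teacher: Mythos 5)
Your overall strategy is the same as the paper's: a Knill--Laflamme-style reduction in which the BQP circuit $U$ is sandwiched with its inverse around a single-qubit gate on the output wire, followed by an application of Proposition \ref{prop:gap} and the observation $w(f) \le \ell$ to convert the $1/3$ amplitude-error budget into the stated $\gap$ error budget. Your error accounting at the end (checking that $w(f_C)\le\ell$ makes the stated budget at least as stringent as what the reduction needs) is correct and matches the paper's one-line remark.

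However, there is a concrete error at the central step: you insert an $X$ gate on the output qubit and claim $\braket{0|U^\dagger X_1 U|0} = 1-2p$. This identity is false. Writing $U\ket{0} = \ket{0}\ket{\psi_0} + \ket{1}\ket{\psi_1}$ with $p = \|\psi_1\|^2$, one has $\braket{0|U^\dagger X_1 U|0} = 2\,\mathrm{Re}\,\ip{\psi_0}{\psi_1}$, which in general carries no information about $p$ at all (e.g.\ it vanishes whenever $\ket{\psi_0}$ and $\ket{\psi_1}$ have orthogonal supports, regardless of whether $p$ is $0$, $1/2$, or $1$). The observable whose expectation equals $P(0)-P(1) = 1-2p$ for a computational-basis measurement is $Z_1$, not $X_1$; the paper accordingly uses $C' = C^\dagger Z_1 C$. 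The fix is immediate --- replace $X_1$ by $Z_1$, which has the added convenience of lying directly in the gate set $\mathcal{F}$ (whereas $X$ would have to be written as $HZH$) --- but as written the reduction does not compute the acceptance probability, so the proof does not go through without this correction.
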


\begin{proof}
We first recall that solving decision problems reduces to computing quantum circuit amplitudes (this is an observation of Knill and Laflamme~\cite{knill98}). Assume that we are given some quantum circuit $C$ containing only Hadamard, Z, CZ and CCZ gates, which is applied to the initial state $\ket{0}$, followed by a measurement of the first qubit. We would like to approximately determine the probability that this measurement outputs 1. As the set of gates $\{$Hadamard, CCZ$\}$ is universal for quantum computation~\cite{shi03,aharonov03}, this is sufficient to solve any problem in BQP. So consider the following circuit $C'$:
\[
\Qcircuit @C=1em @R=.7em {
 & \multigate{2}{C} & \gate{Z} \qw & \multigate{2}{C^\dag} & \qw  \\
 & \ghost{C} & \qw & \ghost{C^\dag} & \qw \\
 & \ghost{C} & \qw & \ghost{C^\dag} & \qw 
}
\]
Then $\bracket{0}{C'}{0} = \bracket{0}{C^\dag Z_1 C}{0} = \tr Z_1 (C\ket{0}\bra{0}C^\dag)$, which is precisely the difference between the probability that the measurement outputs 0, and the probability that it outputs 1. By the definition of the error bounds in BQP, we have $|\bracket{0}{C'}{0}| \ge 1/3$, so it is sufficient to estimate $\bracket{0}{C'}{0}$ up to absolute error less than $1/3$ to determine whether the answer should be 0 or 1. As discussed in Section \ref{sec:circpoly}, we can assume that $C'$ begins and ends with Hadamards on every qubit (equivalently, that $C$ begins with Hadamards on every qubit).

From Proposition \ref{prop:gap}, there is a degree-3 polynomial $f_{C'}:\{0,1\}^n \rightarrow \{0,1\}$, where $n=h+\ell$, $h$ is the number of Hadamard gates in the internal part of $C'$ and $\ell$ is the number of qubits on which $C'$ acts, such that $\braket{0|C'|0} = \gap(f_{C'}) / 2^{h/2+\ell}$. So it is sufficient to determine $\gap(f_{C'})$ up to absolute accuracy $\frac{1}{3} \cdot 2^{h/2+\ell} = \frac{1}{3} \cdot 2^{n/2+\ell/2}$ to solve the original decision problem. Observing that $\ell \ge w(f)$ by definition completes the proof.
\end{proof}

We have seen that approximately computing $\gap(f)$ up to accuracy $O(2^{(n+w(f))/2})$ is sufficient to simulate arbitrary quantum computations. This is already sufficient to imply the known complexity class inclusion\footnote{In fact, this argument also gives an alternative proof of the tighter complexity class inclusion BQP$\subseteq$AWPP, due to Fortnow and Rogers~\cite{fortnow98}.} BQP$\subseteq$P$^{\#\text{P}}$~\cite{bernstein97,dawson05}, as it is easy to see that $\gap(f)$ can be computed exactly by counting the number of inputs of a circuit which evaluate to 1, and hence is in P$^{\#\text{P}}$. Does the implication go the other way? That is, can we use quantum computation to approximate $\gap(f)$ up to accuracy $O(2^{(n+w(f))/2})$? If so, this would imply that approximating $\gap(f)$ up to this level of accuracy is effectively equivalent\footnote{Technically, equivalent to the complexity class PromiseBQP~\cite{janzing07}: the class of problems which reduce to determining whether the acceptance probability of a quantum computation is greater than $2/3$ or less than $1/3$, given the promise that exactly one of these is the case.} to the complexity class BQP. This would give a new example of a combinatorial problem which characterises the power of quantum computation. Several such examples are known (e.g.~\cite{knill01,aharonov06,janzing07,vandennest08a}), but approximately computing the number of zeroes of degree-3 polynomials would arguably be the simplest yet.

For any quantum circuit $C$, the Hadamard test~\cite{aharonov06} can be used to estimate $\braket{0|C|0}$ up to inverse-polynomially small absolute error. So, if we are given a circuit on $\ell$ qubits corresponding to a polynomial $f$, we can estimate $\gap(f)$ up to accuracy $O(2^{n/2+\ell/2})$. If $\ell = w(f)$, we have achieved an approximation which matches the bound of Proposition \ref{prop:gapbqp}. However, it is not clear how to efficiently determine a quantum circuit corresponding to $f$ which acts on $w(f)$ qubits. Indeed, even determining $w(f)$ itself could be NP-complete.

\begin{prob}
What is the complexity of computing $w(f)$ for an arbitrary degree-3 polynomial $f:\{0,1\}^n \rightarrow \{0,1\}$?
\end{prob}

To achieve a good enough level of accuracy in estimating $\gap(f)$, it would be sufficient to find a circuit on $\ell$ qubits such that $\ell = w(f) + O(\log n)$. But it is non-obvious how to obtain even this level of accuracy.

We can also relate the quantum circuit width to the complexity of {\em classical} simulation.

\begin{prop}
\label{prop:classical}
Given a degree-3 polynomial $f:\{0,1\}^n \rightarrow \{0,1\}$ and a description of a quantum circuit on $\ell$ qubits corresponding to $f$, $\gap(f)$ can be calculated exactly classically in time $O(2^{2\ell} \poly(n))$. Further, $\gap(f)$ can be approximated up to additive error $\epsilon\,2^n$ with success probability $2/3$ in time $O(\poly(n)/\epsilon^2)$.
\end{prop}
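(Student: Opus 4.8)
The plan is to treat the two claims separately, with both resting on Proposition \ref{prop:gap}, which gives the identity $\gap(f) = 2^{h/2+\ell}\braket{0|C|0}$, where $h = n-\ell$ is the number of internal Hadamard gates of the given circuit $C$. For the exact bound I would compute the amplitude $\braket{0|C|0}$ by straightforward classical simulation of $C$, and then rescale; for the approximate bound I would bypass the circuit entirely and apply Monte Carlo sampling directly to the polynomial $f$.

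First I would establish the exact $O(2^{2\ell}\poly(n))$ bound. Maintaining the state vector of $C\ket{0}$ as an element of $\C^{2^\ell}$, I apply the gates of $C$ one at a time. The circuit has only $\poly(n)$ gates (a degree-3 polynomial on $n$ variables has $O(n^3)$ monomials, each contributing a Z, CZ or CCZ gate, and there are $h \le n$ Hadamards), and each gate acts on the $2^\ell$-dimensional vector in time at most $O(2^{2\ell})$ by a crude matrix–vector bound. Hence $C\ket{0}$ is computed in time $O(2^{2\ell}\poly(n))$, after which reading off $\braket{0|C|0}$ and multiplying by $2^{h/2+\ell}$ yields $\gap(f)$. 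The one point needing care is that this be \emph{exactly} correct: since every gate in $\mathcal{F}$ is real and the only irrational factors introduced are powers of $1/\sqrt{2}$ from the Hadamards, every amplitude of $C\ket{0}$ has the form $a_x/2^{h/2}$ with $a_x \in \Z$ and $\sum_x a_x^2 = 2^h$, so each $a_x$ is an integer of $O(h) = O(n)$ bits. One therefore tracks the integers $a_x$ together with the common scale $2^{h/2}$ using $\poly(n)$-bit arithmetic throughout, and the overhead is already absorbed into the stated bound.

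For the approximate bound I would sample directly on the polynomial. Since $\gap(f) = \sum_{x \in \{0,1\}^n} (-1)^{f(x)}$, drawing $x$ uniformly from $\{0,1\}^n$ and setting $Y = (-1)^{f(x)}$ gives $\E[Y] = \gap(f)/2^n$ and $\Var(Y) \le 1$. Averaging $N = O(1/\epsilon^2)$ independent samples and rescaling by $2^n$ produces an estimator of $\gap(f)$ whose error exceeds $\epsilon\,2^n$ with probability at most $1/3$, by Chebyshev's inequality. Each sample costs a single evaluation of the degree-3 polynomial $f$, which has $O(n^3)$ terms and so takes $\poly(n)$ time, for an overall running time of $O(\poly(n)/\epsilon^2)$.

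Neither half presents a genuine obstacle; the main subtlety worth flagging is the exact-arithmetic bookkeeping of the first part, namely verifying that no precision blow-up occurs so that the simulation really is exact with only $\poly(n)$ overhead. In the second part the only quantitative input is the crude variance bound $\Var(Y) \le 1$, which is precisely what makes the sample complexity independent of both $\ell$ and the structure of $f$.
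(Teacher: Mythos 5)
Your proposal is correct and follows essentially the same route as the paper: exact computation by direct classical simulation of the $\ell$-qubit circuit in time $O(2^{2\ell}\poly(n))$ followed by rescaling via Proposition \ref{prop:gap}, and approximation by uniform Monte Carlo sampling of $f$ with $O(1/\epsilon^2)$ samples (the paper invokes a Chernoff bound where you use Chebyshev, and estimates the fraction of zeroes rather than $(-1)^{f(x)}$ directly, but these are equivalent). Your added remark on exact dyadic arithmetic is a worthwhile detail the paper leaves implicit.
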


\begin{proof}
For any quantum circuit $C$ on $\ell$ qubits containing $m$ gates, $\braket{0|C|0}$ can be calculated in time $O(2^{2\ell} m)$ simply by multiplying out the matrices. If $C$ represents $f$, it can be assumed to contain at most $\poly(n)$ gates, so $m = \poly(n)$. For the second part, we can estimate $|\{x:f(x)=0\}|/2^n$ by taking the average of $s$ random samples from $f(x)$. Each sample can be computed in time $\poly(n)$. By a standard Chernoff bound argument~\cite{dubhashi09}, in order for this estimate to be correct up to absolute error $\epsilon$ with probability $2/3$, it is sufficient to take $s = O(1/\epsilon^2)$.
\end{proof}

Using the second approach in Proposition \ref{prop:classical}, we can achieve the same level of approximation accuracy achieved by an optimal quantum circuit by taking $\epsilon = O(2^{(w(f)-n)/2})$, giving a classical algorithm which runs in time $O(2^{n-w(f)} \poly(n))$. Thus observe that, if either $w(f) \ge n - O(\log n)$ or $w(f) \le O(\log n)$, the speedup we could obtain by using a quantum algorithm to compute $\gap(f)$ cannot be super-polynomial (but apparently for different reasons). In the former case, the approximate classical algorithm from Proposition \ref{prop:classical} runs in polynomial time; in the latter case, the exact classical algorithm runs in polynomial time.
%But even width $n/2$ (say) might give a non-trivial algorithm?

These results motivate us to further explore the concept of quantum circuit width.

% ------------------------------------------------------------------------------

\subsection{Quantum circuit width}
\label{sec:width}

We first show that most degree-3 polynomials $f$ have high quantum circuit width, and hence that $\gap(f)$ cannot be approximated significantly more efficiently using this quantum circuit approach than is possible classically.

\begin{prop}
The probability that a random degree-3 polynomial $f:\{0,1\}^n \rightarrow \{0,1\}$ with no constant term has $w(f) \le n-3$ is at most $2^{(-3n+1)/2}$.
\end{prop}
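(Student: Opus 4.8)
The plan is to convert the width bound into a purely combinatorial necessary condition on the degree-3 part of $f$, and then estimate the probability that a random $f$ meets it. The key observation is the one already used for Observation~\ref{obs:largew}: a degree-3 monomial $x_a x_b x_c$ of $f_C$ can only be created by a single CCZ gate, and a CCZ gate acts on three distinct qubits, so the three variables of any degree-3 monomial must sit on three distinct wires. Consequently, if $w(f)\le\ell$ then there is an assignment of the $n$ variables to $\ell$ wires under which no degree-3 monomial of $f$ has two of its variables on the same wire. Encoding this as a graph, let $G(f)$ be the \emph{co-occurrence graph} on vertices $\{1,\dots,n\}$ with an edge $\{a,b\}$ exactly when some degree-3 monomial $x_a x_b x_c$ occurs in $f$; then such a wire assignment is precisely a proper colouring of $G(f)$, so $w(f)\le n-3$ forces $\chi(G(f))\le n-3$.

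Next I would bound $\Pr[\chi(G(f))\le n-3]$ for a uniformly random degree-3 polynomial with no constant term. A proper $(n-3)$-colouring of $G(f)$ is a partition of its $n$ vertices into $n-3$ independent sets, equivalently into $n-3$ cliques of the complement $\overline{G(f)}$; since singletons are cliques, this requires a family of vertex-disjoint cliques in $\overline{G(f)}$ whose total number of vertices exceeds the number of cliques by at least $3$. The cheapest and, for the sparse $\overline{G(f)}$ at hand, by far the most likely way to achieve this is three vertex-disjoint edges of $\overline{G(f)}$. An edge $\{a,b\}$ of $\overline{G(f)}$ is a pair appearing in no degree-3 monomial, i.e.\ all $n-2$ coefficients of $x_a x_b x_c$ $(c\ne a,b)$ vanish, which happens with probability $2^{-(n-2)}$; for three disjoint pairs the relevant coefficient sets are disjoint, so the events are independent with joint probability $2^{-3(n-2)}$. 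A union bound over the $O(n^6)$ triples of disjoint pairs, together with a check that the remaining clique-cover patterns (an edge plus a disjoint triangle, or a four-clique, in $\overline{G(f)}$) contribute lower-order terms, yields $\Pr[w(f)\le n-3]\le 2^{(-3n+1)/2}$.

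The difficulty here is entirely in the last, quantitative, step; the structural reduction is immediate. The points needing care are: justifying that $\chi(G(f))\le n-3$ really forces one of the finitely many ``savings'' patterns in $\overline{G(f)}$, and enumerating them; handling the dependence between overlapping pairs (a triangle on $\{a,b,c\}$ kills the three families of monomials but shares the single monomial $x_a x_b x_c$, so careful inclusion-exclusion is needed to show it is lower-order); and absorbing the polynomial prefactor to reach the clean stated bound, which holds for all $n\ge 4$ once the smallest cases are verified directly (for $n\le 3$ the event is empty). I note that this argument in fact gives the asymptotically sharper estimate $O(n^6)\,2^{-3n}$, so the bound $2^{(-3n+1)/2}$ is a clean, mildly lossy consequence.
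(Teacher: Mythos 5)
Your approach is genuinely different from the paper's. The paper proves this proposition by directly \emph{counting} the polynomials realisable by a circuit on $k\le n-3$ qubits (sliding the diagonal gates left against the Hadamards to bound the number of distinct gate configurations per block) and comparing with the total number $2^{(n^3+5n)/6}$ of degree-3 polynomials with no constant term. You instead extract a necessary combinatorial condition --- that the co-occurrence graph of the degree-3 terms must be properly $(n-3)$-colourable --- and bound the probability of that condition by a union bound over the three possible ``savings'' patterns $(2,2,2)$, $(3,2)$, $(4)$ in the complement. The structural reduction is correct (a surviving degree-3 monomial must come from at least one CCZ gate, whose three segments lie on three distinct wires), your enumeration of the clique-cover profiles is complete, and the independence claim for disjoint pairs checks out. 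Asymptotically your method is in fact \emph{stronger} than the paper's: it gives $O(n^6)2^{-3n}$ versus the paper's $2^{(-3n+1)/2}$.

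However, there is a genuine quantitative gap in the last step, and it is not merely a matter of ``verifying the smallest cases directly.'' The dominant term of your union bound is $\frac{1}{48}n(n-1)\cdots(n-5)\,2^{-3(n-2)}$, and this falls below $2^{(-3n+1)/2}$ only for $n\gtrsim 15$. Worse, for a middle range of $n$ (roughly $7\le n\le 13$) the \emph{actual} probability of your relaxed event already exceeds the target: e.g.\ for $n=7$ the expected number of non-edges of the co-occurrence graph is $\binom{7}{2}2^{-5}\approx 0.66$, and the probability of three disjoint non-edges is on the order of $2\text{--}3\times 10^{-3}$, whereas the bound to be proven is $2^{-10}\approx 10^{-3}$. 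So in that range no sharpening of the union bound or inclusion--exclusion can rescue the argument: the necessary condition you reduce to is simply too weak, and you would have to return to the original event $w(f)\le n-3$ and argue differently (using, say, the additional constraints that the wire structure imposes, or the paper's direct count). As written, your proof establishes the proposition only for all sufficiently large $n$, which is weaker than the stated claim; either restrict the statement to large $n$ (where your bound is much better), or supply a separate argument for $4\le n\le 14$.
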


\begin{proof}
We count the number of different functions which can correspond to a circuit on $k$ qubits of the form discussed in this work whose internal part contains $n-k$ Hadamards (giving a polynomial on $n$ variables). Break the internal part of the circuit into $n-k+1$ horizontal blocks such that each Hadamard $H_1,\dots,H_h$ begins a block. Then slide (commute) all the Z, CZ, CCZ gates in the circuit to the left until they cannot go any further (i.e.\ come up against a Hadamard). Then, except for the furthest left-hand block, each such gate acts on the qubit corresponding to the Hadamard which begins its block. Therefore, there are at most $2^{\binom{k}{2} + k + 1} = 2^{k(k+1)/2+1}$ different possibilities for the combination of gates in each block, except the left-hand block, where there are $2^{\binom{k}{3} + \binom{k}{2} + k} = 2^{k(k^2+5)/6}$ possibilities. There are $k^{n-k}$ possibilities for the vertical position of the Hadamards. Overall, we get an upper bound on the number of functions that can be produced which is equal to
\[ 2^{(n-k)(k(k+1)/2+1) + k(k^2+5)/6 + (n-k) \log_2 k}. \]
Take the rough upper bound $\log_2 k \le k/2$, valid for large enough $k$. Then the above quantity is increasing with $k$ and for $k=n-3$ is equal to $2^{(n^3-4n+3)/6}$. On the other hand, there are $2^{\binom{n}{3}+\binom{n}{2}+n} = 2^{(n^3+5n)/6}$ degree-3 polynomials on $n$ variables with no constant term. Thus the fraction of polynomials $f$ such that $w(f) \le n-3$ is at most exponentially small in $n$.% (Interestingly, for $k=n-2$, this argument no longer works! We get an upper bound of $2^{(n^3+5n-6)/6}$.)
\end{proof}

We next relate the quantum circuit width of a polynomial to a combinatorial parameter of a hypergraph associated with the polynomial. A hypergraph $G = (V,E)$ is defined by a set of vertices $V$ and a set of hyperedges $E$, where a hyperedge is a subset of at least 2 of the vertices. We can associate a degree-3 polynomial $f$ with a hypergraph $G(f)$ by associating each variable with a vertex, and thinking of each term involving at most 3 variables as a hyperedge between at most 3 vertices. A proper $k$-colouring of a hypergraph $G$ is an assignment of colours to vertices, picked from a set of colours of size $k$, such that at least two vertices within each hyperedge are assigned different colours. The chromatic number of a hypergraph, $\chi(G)$, is defined to be the minimal $k$ such that there exists a proper $k$-colouring of $G$.

\begin{prop}
For any degree-3 polynomial $f:\{0,1\}^n \rightarrow \{0,1\}$, $\chi(G(f)) \le 2 w(f)$, and this inequality can be tight. However, there exists a family of polynomials $f:\{0,1\}^n \rightarrow \{0,1\}$, for $n$ even, such that $\chi(G(f)) = 2$ but $w(f)=n/2$.
\end{prop}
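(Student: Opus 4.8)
The plan is to treat the three assertions in turn, built on one structural fact: in any circuit realising $f$, a CZ gate acts on two distinct wires and a CCZ gate on three distinct wires, so variables sharing a term that comes from such a gate lie on distinct wires, whereas a Hadamard gate joins two consecutive segments of a \emph{single} wire.

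For the bound $\chi(G(f)) \le 2w(f)$, I would take a circuit realising $f$ on $\ell = w(f)$ wires and colour each variable (segment) according to its wire, using two colours per wire alternating along the wire: the segments of wire $i$, read left to right, receive colours $2i-1, 2i, 2i-1, \dots$. This uses $2\ell$ colours. To verify properness, every degree-$3$ term comes from a CCZ on three distinct wires, and a degree-$2$ term coming from a CZ involves two distinct wires; in either case the variables lie in disjoint colour-pairs $\{2i-1,2i\}$ and so receive distinct colours. The only remaining terms are degree-$2$ terms produced by Hadamard gates, joining two consecutive segments of one wire, and the alternation colours these differently. Hence no hyperedge is monochromatic and $\chi(G(f)) \le 2\ell = 2w(f)$.

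For tightness I would use the polynomial $f' = x_1 x_2 + x_2 x_3 + \dots + x_{n-1}x_n$, already noted in the text to satisfy $w(f')=1$. Its hypergraph is the path on $n$ vertices, which has chromatic number $2$, so $\chi(G(f')) = 2 = 2w(f')$; in particular this shows the factor $2$ cannot be removed, since the Hadamard terms force same-wire variables to share no colour.

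For the separation, for $n=2m$ I would take variables $a_1,\dots,a_m,b_1,\dots,b_m$ and set $f = \sum_{\{i,j,k\}\in H} a_i a_j a_k + \sum_{i=1}^m a_i b_i$, where $H$ is a $3$-uniform hypergraph on $\{a_1,\dots,a_m\}$ chosen so that (i) every pair $\{a_i,a_j\}$ lies in some hyperedge and (ii) $H$ is $2$-colourable as a hypergraph. Such an $H$ exists: fix a $2$-colouring of the $a$'s into nonempty classes $P,Q$ and, for each pair $\{a_i,a_j\}$, include a hyperedge containing it whose third vertex is chosen in the opposite class when $a_i,a_j$ share a class (arbitrary otherwise); then every pair is covered and no hyperedge is monochromatic. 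Colouring each $b_i$ opposite to $a_i$ extends the colouring of $H$ to a proper $2$-colouring of $G(f)$ (the edges $a_ib_i$ are non-monochromatic), so $\chi(G(f))=2$. For the lower bound, property (i) forces every pair $a_i,a_j$ onto distinct wires in any realising circuit, so at least $m$ wires are needed and $w(f)\ge m = n/2$. For the matching upper bound I would exhibit a circuit on $m$ wires: place $a_i$ and $b_i$ as the two segments of wire $i$ separated by a single Hadamard (producing the term $a_ib_i$), and realise each hyperedge $\{a_i,a_j,a_k\}$ by a CCZ acting on the corresponding $a$-segments, scheduled before all the Hadamards. This gives $w(f)\le m$, hence $w(f)=n/2$.

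The main obstacle is this last construction: one must simultaneously make the degree-$3$ part force $m$ pairwise-distinct wires (a global constraint, amounting to its $2$-section being the complete graph $K_m$) while keeping the full hypergraph merely $2$-colourable, and then confirm that the padding variables $b_i$ can be absorbed onto the same $m$ wires so that the lower and upper bounds meet exactly at $n/2$. The key consistency check in the upper bound is that all the CCZ gates can indeed be placed before the Hadamards, so that they act on the intended $a$-segments and produce precisely the terms of $H$.
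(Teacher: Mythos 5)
Your colouring argument for $\chi(G(f)) \le 2w(f)$ and your tightness example (the path polynomial $x_1x_2 + \dots + x_{n-1}x_n$) are exactly the paper's, and both are correct. Where you genuinely diverge is the separating family. The paper takes the much simpler $f = x_1x_2 + x_3x_4 + \dots + x_{n-1}x_n$, whose hypergraph is a perfect matching (hence $2$-chromatic); the lower bound $w(f) \ge n/2$ there rests on the observation that a wire with $s \ge 3$ segments necessarily injects an uncancellable path on $s$ vertices into $G(f)$ (a Hadamard-adjacency term cannot be killed by a CZ, which must straddle two distinct wires), forcing every wire to carry at most two variables --- a point the paper actually leaves implicit. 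You instead build a degree-$3$ example: a $2$-colourable $3$-uniform hypergraph $H$ on $a_1,\dots,a_m$ covering every pair, padded with edges $a_ib_i$, so that the lower bound $w(f) \ge m$ falls out immediately from the fact that a CCZ gate occupies three distinct wires. Your route buys a cleaner and more robust lower-bound mechanism (no cancellation analysis needed, and it genuinely exercises the degree-$3$ structure the proposition is nominally about), at the cost of a more elaborate construction that needs $m \ge 3$ and a small existence argument for $H$; the paper's buys brevity at the cost of an unstated step. Both are valid, and your matching upper-bound circuit (CCZs on the leftmost segments, then one Hadamard per wire to create $a_ib_i$) does produce exactly $f$, so $w(f) = n/2$ as required.
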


\begin{proof}
Given a circuit for $f$ using $\ell$ qubits, each pair of variables which are associated with the same qubit but are not adjacent cannot be included in the same term of $f$. We can thus properly colour the vertices of $G(f)$ using at most $2\ell$ colours by associating a pair of colours $(c_i,d_i)$ with each qubit, and allocating colour $c_i$ (resp.\ $d_i$) to those vertices which occur on line $i$ at odd (resp.\ even) times. Tightness follows from the function $f(x) = x_1 x_2 + x_2 x_3 + \dots + x_{n-1} x_n$, which has $w(f)=1$. As $G(f)$ is a path on $n$ vertices, $\chi(G(f))=2$.

For the second part, consider the polynomial $f(x) = x_1 x_2 + x_3 x_4 + \dots + x_{n-1} x_n$. The corresponding graph $G(f)$ consists of $n/2$ disjoint edges and hence can be properly coloured with 2 colours.
\end{proof}

% ------------------------------------------------------------------------------

\section{Polynomials and simulation of quantum circuits}
\label{sec:polysim}

We have seen that, using the construction of Proposition \ref{prop:gapbqp}, in order to simulate a quantum circuit -- i.e.\ to determine the probability that, at the end of the circuit, the result of measuring the first qubit would be 1 -- it is sufficient to compute $\gap(f)$ for a related function $f$. One can use this idea to easily obtain various simulation results for classes of quantum circuits.

First, as discussed in Section \ref{sec:approxcomplexity}, any circuit containing only Hadamard, Z and CZ gates can be simulated efficiently classically using the Gottesman-Knill theorem~\cite{nielsen00}. This result can be generalised to circuits containing a small number of CCZ gates as follows.

\begin{prop}
Let $S$ be a hitting set for the collection of degree-3 terms of $f:\{0,1\}^n \rightarrow \{0,1\}$ (in other words, a set of variables such that each degree-3 term contains at least one element of $S$). Then, given $f$ and $S$, $\gap(f)$ can be computed in time $O(2^{|S|} \poly(n))$.
\end{prop}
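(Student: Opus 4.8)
The plan is to reduce computing $\gap(f)$ to a batch of $2^{|S|}$ gap computations for degree-2 polynomials, each of which can be handled by the polynomial-time algorithm for degree-2 polynomials recalled in Section~\ref{sec:phard}. The starting point is to split the defining sum $\gap(f) = \sum_{x \in \{0,1\}^n} (-1)^{f(x)}$ according to the values of the variables in $S$. For each assignment $a \in \{0,1\}^{|S|}$ to those variables, let $f_a$ be the polynomial on the remaining $n-|S|$ variables obtained by substituting $a$ into $f$; grouping the sum by $a$ gives $\gap(f) = \sum_{a \in \{0,1\}^{|S|}} \gap(f_a)$.

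The crucial use of the hitting-set hypothesis is that fixing the $S$-variables eliminates the degree-3 part of $f$. Since every degree-3 term contains at least one variable of $S$, each such term either vanishes (if one of its $S$-variables is set to $0$) or collapses to a term of degree at most $2$ in the surviving variables. Hence every $f_a$ is a polynomial of degree at most $2$, possibly carrying linear and constant terms.

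I would then compute $\gap(f_a)$ for each $a$ using the $O(n^3)$ algorithm for the gap of degree-2 polynomials described earlier (via the stabilizer inner product, equivalently the Gottesman-Knill argument). A constant term in $f_a$ only multiplies $\gap(f_a)$ by an overall sign, and linear terms are handled by the same argument, so the algorithm applies directly. Forming each $f_a$ costs $\poly(n)$ time, and there are $2^{|S|}$ assignments $a$ to sum over, giving total running time $O(2^{|S|}\poly(n))$, as claimed.

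The only point needing care is verifying that each restricted polynomial is genuinely of degree at most $2$ and that the degree-2 gap routine applies to it including the low-order terms; but once the hitting-set property is invoked this is routine bookkeeping, and I do not expect a substantive obstacle.
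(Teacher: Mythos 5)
Your proposal is correct and follows essentially the same route as the paper: decompose $\gap(f)$ as a sum of $\gap(f_a)$ over the $2^{|S|}$ assignments to the variables in $S$ (the paper phrases this as a recursive one-variable-at-a-time splitting, which is the same thing), observe that the hitting-set property forces each restricted polynomial to have degree at most $2$, and invoke the $O(n^3)$ algorithm for the gap of degree-2 polynomials. No substantive differences.
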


\begin{proof}
For any variable $x_i$, let $f_{x_i\leftarrow z}$ denote the function obtained from $f$ by fixing the value of $x_i$ to $z$. Then it is easy to see that $\gap(f) = \gap(f_{x_i\leftarrow 0}) + \gap(f_{x_i\leftarrow 1})$. Applying this recursively, for any set $S$ of variables, $\gap(f)$ can be computed by summing the gaps of the $2^{|S|}$ functions obtained by fixing each of the variables in $S$ to either 0 or 1. If we choose $S$ to include at least one variable from each of the degree-3 terms in $f$, each new polynomial produced has degree at most 2, and hence has gap computable in time $O(n^3)$~\cite{ehrenfeucht90,aaronson04a}.
\end{proof}

Observe that, if $f$ contains $k$ degree-3 terms, there is always a hitting set containing $k$ elements (just by taking one variable from each term). More generally, we would like to find a hitting set of minimal size $h(f)$. This is an NP-complete problem~\cite{garey79}, but luckily an approximation $h'(f) \le 3h(f)$ can be found in polynomial time (approximating $h(f)$ any better than this is NP-hard~\cite{khot08}, assuming the Unique Games Conjecture from complexity theory). We therefore have that $\gap(f)$ can be computed in time $2^{O(h(f))} \poly(n)$.

The construction of Proposition \ref{prop:gapbqp} produces a circuit $C'$ from any circuit $C$ on $\ell$ qubits whose corresponding polynomial $f_{C'}$ satisfies $h(f_{C'}) \le 2 h(f_C)$. Therefore, any polynomial-size circuit $C$ can be simulated in time $2^{O(h(f_C))} \poly(\ell)$. It was already shown by Aaronson and Gottesman that circuits on $\ell$ qubits containing $k$ non-Clifford gates can be simulated in time $2^{O(k)} \poly(\ell)$~\cite{aaronson04a}, and more recent work has improved the constant hidden in the $O(k)$ term for circuits where the only non-Clifford gate is the T gate~\cite{bravyi16,bravyi16a}. However, the result here is somewhat more general in that there exist circuits with many CCZ gates whose corresponding polynomial has a small hitting set. For example, Figure \ref{fig:treew} illustrates a circuit on $\ell$ qubits containing CCZ gates from the first qubit to every other pair of qubits; this circuit has $\binom{\ell-1}{2}$ gates but a hitting set of size 1.

Also observe that this simulation does not seem to follow immediately from the results of Markov and Shi~\cite{markov08} on simulating quantum circuits by tensor contraction in time exponential in the tree-width of the circuit. Indeed, there exist circuits that contain only Clifford gates but have arbitrarily high tree-width.

\begin{figure}
\[
\Qcircuit @C=1.2em @R=1.2em {
 & \ctrl{2} & \ctrl{3} & \ctrl{4} & \qw & \dots & & \ctrl{7} \qw & \qw \\
 & \control \qw & \control \qw & \control \qw & \qw & \dots & & \qw & \qw\\
 & \control \qw & \qw & \qw & \qw & \dots & & \qw & \qw \\
 & \qw & \control \qw & \qw & \qw & \dots & & \qw & \qw \\
  & \qw & \qw & \control \qw & \qw & \dots & & \qw & \qw \\
 & & \vdots & & & & \vdots \\
 & \qw & \qw & \qw & \qw & \dots & & \control \qw & \qw \\
 & \qw & \qw & \qw & \qw & \dots & & \control \qw & \qw
}
\]
\caption{A circuit with a hitting set of size 1 but many non-Clifford gates.}
\label{fig:treew}
\end{figure}
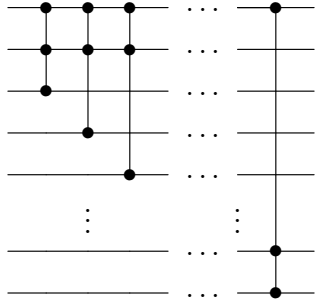

% ------------------------------------------------------------------------------

\subsection{Simulation by linear transformations}

In order to calculate $\gap(f)$ more efficiently, we can attempt to transform $f$ into a polynomial which is simpler in some sense. One way of doing this is to apply a linear transformation to $f$. The following result is well-known in the theory of error-correcting codes~\cite{macwilliams83}; we include the simple proof for completeness.

\begin{prop}
For any degree-3 polynomial $f:\{0,1\}^n \rightarrow \{0,1\}$, and any nonsingular linear transformation $L \in GL_n(\F_2)$, let $f^L$ be the polynomial $f^L(x) = f(Lx)$. Then $\deg(f^L) = 3$ and $\gap(f^L) = \gap(f)$.
\end{prop}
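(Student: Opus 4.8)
The plan is to treat the two assertions separately: the invariance of the gap, which follows immediately from $L$ being a bijection, and the preservation of degree, which is slightly more delicate and is where the only real subtlety lies.

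First I would dispose of the gap. By definition $\gap(f^L) = \sum_{x \in \{0,1\}^n} (-1)^{f(Lx)}$. Since $L$ is nonsingular, the map $x \mapsto Lx$ is a bijection of $\F_2^n$ onto itself, so as $x$ ranges over all of $\{0,1\}^n$ the vector $y = Lx$ ranges over $\{0,1\}^n$ exactly once. Reindexing the sum by $y$ then gives $\gap(f^L) = \sum_{y} (-1)^{f(y)} = \gap(f)$, using nothing beyond invertibility of $L$.

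For the degree, the first half is a one-line upper bound. Writing $f$ in multilinear form, each monomial of $f$ is a product of at most three variables, and substituting for each $x_i$ the linear form $(Lx)_i = \sum_j L_{ij} x_j$ replaces that monomial by a product of at most three linear forms. Expanding the product and then reducing modulo the relations $x_j^2 = x_j$ (to return to multilinear form) can never increase the degree, so $\deg(f^L) \le 3$.

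The matching lower bound is where one must be careful: a priori the reduction modulo $x_j^2 = x_j$ might cause the degree-3 terms to cancel, leaving $\deg(f^L) < 3$. The cleanest way to rule this out is to exploit the invertibility of the whole operation, rather than tracking the top-degree terms through the substitution. Composition of linear substitutions satisfies $(f^L)^M = f^{LM}$, so taking $M = L^{-1}$ gives $(f^L)^{L^{-1}} = f^{I} = f$. Applying the degree upper bound just established, but now to the polynomial $f^L$ under the linear map $L^{-1}$, yields $\deg(f) = \deg\big((f^L)^{L^{-1}}\big) \le \deg(f^L)$. Combined with $\deg(f^L) \le 3 = \deg(f)$, this forces $\deg(f^L) = 3$. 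I expect this symmetry argument to be the crux of the proof: since $g \mapsto g^L$ and $g \mapsto g^{L^{-1}}$ are mutually inverse and each is degree-non-increasing, the degree is pinned down exactly, with no need to argue directly that the leading terms cannot all cancel.
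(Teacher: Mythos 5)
Your proof is correct, and for the gap and the degree upper bound it follows essentially the same route as the paper: the bijectivity of $x \mapsto Lx$ gives $\gap(f^L)=\gap(f)$, and substituting the linear forms $(Lx)_i$ into each monomial gives $\deg(f^L)\le 3$. Where you go beyond the paper is the lower bound: the paper's proof stops at $\deg(f^L)\le 3$ and never actually argues that the degree cannot drop, even though the proposition asserts $\deg(f^L)=3$. Your symmetry argument --- that $(f^L)^{L^{-1}}=f$, so the degree-non-increasing map $g\mapsto g^{L^{-1}}$ forces $\deg(f)\le\deg(f^L)$ --- is exactly the right way to close that gap, and it is the standard reason why $GL_n(\F_2)$ preserves degree exactly rather than merely not increasing it. So your write-up is, if anything, more complete than the paper's.
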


\begin{proof}
To produce $f^L$ from $f$, we can replace each term $x_i x_j x_k$ with a term $(Lx)_i (Lx)_j (Lx)_k$ (and similarly for the terms dependent on 1 or 2 variables). As $(Lx)_i$ is a linear function of $x$ over $\F_2$, and similarly for $j$, $k$, the product of these functions is a polynomial of degree at most 3. For the second part, as $L$ is nonsingular, there is a one-to-one mapping between the set $\{x:f(x) = 0\}$ and the set $\{x:f^L(x) = 0\}$, so $\gap(f^L) = \gap(f)$.
\end{proof}

In fact, the group $GL_n(\F_2)$ is known to be the {\em largest} group of transformations which preserves polynomial degree~\cite{macwilliams83}. In some cases, a linear transformation can completely change a function's quantum circuit width and hence the efficiency with which its gap can be computed using the exact algorithm of Proposition \ref{prop:classical}. As a very simple example, it is easy to show that the polynomial $x_1 + \dots + x_n$ has quantum circuit width $n$, but following a linear transformation that maps $x_1 + \dots + x_n \mapsto x_1$, the resulting polynomial $x_1$ has quantum circuit width 1.

Although we do not know a general way of minimising the quantum circuit width of a function by applying a linear transformation, a simpler approach is to minimise the number of variables on which the function depends. Given a polynomial $f$ which depends on $v$ variables, $\gap(f)$ can be computed exactly in time $O(2^v \poly(v))$ simply by evaluating $f$ on each of the $2^v$ possible assignments to the variables. It has been shown by Carlini~\cite{carlini06} (see also Appendix B of~\cite{kayal11}) that the linear transformation $L$ which minimises the number of variables in $f^L$ can be computed in polynomial time. We therefore obtain the following corollary:

\begin{cor}
Let $C$ be a polynomial-size quantum circuit on $\ell$ qubits such that there exists a linear transformation $L$ such that $f_C^L$ depends on $v$ variables. Then there is a classical algorithm which computes $\braket{0|C|0}$ exactly in time $O(2^v \poly(\ell))$.
\end{cor}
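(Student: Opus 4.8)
The plan is to chain together three facts that are already available in the excerpt. First, by Proposition~\ref{prop:gap}, the amplitude $\braket{0|C|0}$ equals $\gap(f_C)/2^{h/2+\ell}$, where $h$ is the number of internal Hadamard gates and $n = h + \ell$. So computing $\braket{0|C|0}$ reduces, up to an explicitly known scaling factor $2^{h/2+\ell}$, to computing $\gap(f_C)$. The whole corollary therefore amounts to showing that $\gap(f_C)$ can be computed in time $O(2^v \poly(\ell))$ once we are promised that a linear transformation $L$ exists which brings $f_C$ down to $v$ essential variables.

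Second, I would invoke the preceding proposition on linear transformations: for any nonsingular $L \in GL_n(\F_2)$, setting $f_C^L(x) = f_C(Lx)$ gives $\gap(f_C^L) = \gap(f_C)$ and preserves degree $3$. Thus it suffices to compute $\gap(f_C^L)$. Third, I would cite Carlini's result (with the pointer to Appendix~B of~\cite{kayal11}) that the $L$ minimising the number of variables on which $f_C^L$ depends can be found in polynomial time; by hypothesis this minimum is at most $v$. Once we have $f_C^L$ in hand, depending on only $v$ variables, its gap is computed by brute-force evaluation over the $2^v$ assignments to those variables, each evaluation costing $\poly(v)$, for a total of $O(2^v \poly(v))$, exactly as in the first part of Proposition~\ref{prop:classical}.

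Assembling these: first compute $L$ in polynomial time, then form $f_C^L$ (a polynomial-time substitution since $C$, hence $f_C$, has $\poly(\ell)$ terms and $n = \poly(\ell)$), then evaluate $\gap(f_C^L) = \gap(f_C)$ by the $O(2^v \poly(v))$ brute-force sum, and finally multiply by $2^{-(h/2+\ell)}$ to recover $\braket{0|C|0}$. Since $v \le n = \poly(\ell)$ and all the overhead factors are polynomial in $\ell$, the dominant cost is $O(2^v)$ and the overall running time is $O(2^v \poly(\ell))$.

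I do not expect any genuine obstacle here, as the statement is essentially a bookkeeping corollary that stitches together results already established. The only point requiring a little care is confirming that the number of variables $v$ referenced in the corollary is precisely the quantity that Carlini's algorithm minimises and outputs, so that the promised bound $v$ and the achievable bound coincide; and ensuring the polynomial-time steps (finding $L$, applying it, and translating between the gap and the amplitude via the known scaling factor) really do contribute only $\poly(\ell)$ overhead rather than anything hidden inside the $2^v$ term. Both are routine to verify.
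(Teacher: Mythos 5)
Your proposal is correct and follows exactly the chain of reasoning the paper intends: Carlini's polynomial-time algorithm to find the variable-minimising $L$, the invariance $\gap(f_C^L)=\gap(f_C)$, brute-force evaluation over $2^v$ assignments, and the rescaling from Proposition \ref{prop:gap}. The paper leaves this corollary unproved precisely because it is this immediate assembly of the preceding results, so there is nothing to add.
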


In particular, if there exists $L$ such that $f_C^L$ depends on $O(\log \ell)$ variables, we obtain a polynomial-time classical simulation of $C$.

% ------------------------------------------------------------------------------

\section{Conclusions}
\label{sec:conclusions}

In this work we have investigated a correspondence between quantum circuits and low-degree polynomials over finite fields, and have shown that by exploiting this correspondence we can obtain classical hardness results, as well as ideas for classical algorithms that simulate quantum circuits. There seem to be many interesting directions in which to further explore this area. For example, as discussed in Section \ref{sec:approxcomplexity}, what is the complexity of computing or approximating the quantum circuit width $w(f)$? Is it related to other measures of complexity of boolean functions? Low-degree polynomials over $\F_2$ are equivalent to Reed-Muller codes~\cite{macwilliams83} -- can ideas from classical coding theory be applied to understand quantum circuits? And finally, can any other useful simulation techniques be developed by taking this perspective -- perhaps for other specific classes of quantum circuits?

% ------------------------------------------------------------------------------

\subsection*{Acknowledgements}

This work was supported by an EPSRC Early Career Fellowship (EP/L021005/1). Some of this work was carried out while the author was at the University of Cambridge. I would like to thank Mick Bremner and Dan Shepherd for discussions on this topic over the last few years, and Scott Aaronson, Miriam Backens and Richard Jozsa for helpful comments on a previous version. Special thanks to Sophie for arriving safely, and providing many helpful distractions from completing this work.

% ------------------------------------------------------------------------------
% ------------------------------------------------------------------------------

\bibliographystyle{plain}
\bibliography{../../thesis}

\end{document}